\newtheorem{lemma}{Lemma}
\newtheorem{corollary}{Corollary}
\newtheorem{proof}{Proof}
\newtheorem{proposition}{Proposition}
\begin{document}

\title{Physical Layer Security in Large-Scale Random Multiple Access Wireless Sensor Networks: A Stochastic Geometry Approach}
\author{Tong-Xing Zheng, \IEEEmembership{Member, IEEE,}
Xin Chen, Chao Wang, \IEEEmembership{Member, IEEE}, Kai-Kit Wong, \IEEEmembership{Fellow, IEEE}, and Jinhong Yuan, \IEEEmembership{Fellow, IEEE}

\thanks{Tong-Xing Zheng and Xin Chen are with the School of Information and Communications Engineering, Xi'an Jiaotong University, Xi'an 710049, China, also with the National Mobile Communications Research Laboratory, Southeast University, Nanjing 210096, China, and also with the Ministry of Education Key Laboratory for Intelligent Networks and Network Security, Xi'an Jiaotong University, Xi'an 710049, China (e-mail: zhengtx@mail.xjtu.edu.cn, cx0513@stu.xjtu.edu.cn).}
\thanks{Chao Wang is with the State Key Laboratory of Integrated Services Networks, Xidian University, Xi'an 710071, China, and also with the National Mobile Communications Research Laboratory, Southeast University, Nanjing 210096, China (e-mail: drchaowang@126.com).}%
\thanks{Kai-Kit Wong is with the Department of Electronic and Electrical Engineering, University College London, WC1E 6BT London, U.K. (e-mail: kai-kit.wong@ucl.ac.uk).}	%
\thanks{Jinhong Yuan is with the School of Electrical Engineering and	Telecommunications, University of New South Wales, Sydney, NSW 2052, Australia (e-mail: j.yuan@unsw.edu.au).}%
}
	\maketitle
	\vspace{-0.8 cm}
	
\begin{abstract}
This paper investigates physical layer security for a large-scale WSN with random multiple access, where each fusion center in the network randomly schedules a number of sensors to upload their sensed data subject to the overhearing of randomly distributed eavesdroppers. 
We propose an uncoordinated random jamming scheme in which those unscheduled sensors send jamming signals with a certain probability to defeat the eavesdroppers.
With the aid of stochastic geometry theory and order statistics, we derive analytical expressions for the connection outage probability and secrecy outage probability to characterize transmission reliability and secrecy, respectively.
Based on the obtained analytical results, we formulate an optimization problem for maximizing the sum secrecy throughput subject to both reliability and secrecy constraints, considering a joint design of the wiretap code rates for each scheduled sensor and the jamming probability for the unscheduled sensors.  
We provide both optimal and low-complexity sub-optimal algorithms to tackle the above problem, and further reveal various properties on the optimal parameters which are useful to guide practical designs.
In particular, we demonstrate that the proposed random jamming scheme is beneficial for improving the sum secrecy throughput, and the optimal jamming probability is the result of trade-off between secrecy and throughput. We also show that the throughput performance of the sub-optimal scheme approaches that of the optimal one when facing a stringent reliability constraint or a loose secrecy constraint. 
\end{abstract}

\begin{IEEEkeywords}
	Physical layer security, wireless sensor networks, random multiple access, outage probability, sum secrecy throughput, stochastic geometry.
\end{IEEEkeywords}
	
\IEEEpeerreviewmaketitle
	
\section{Introduction}
\IEEEPARstart{W}{ireless} sensor networks (WSNs) have drawn  prominent research interests from both academia and industry in recent years and have been envisioned as key technologies for Internet-of-Things (IoTs) \cite{Razzaque2016Middleware}. 
With a massive number of sensors deployed in a network, collecting and reporting diverse environmental data to fusion centers (FCs), WSNs show tremendous potential in a variety of applications, including security and battlefield surveillance, disaster alert, industrial automation, traffic management, smart healthcare and homes, etc \cite{Vo20185G}. 
However, delivering sensing data over the air is prone to eavesdropping attacks due to the openness of wireless channels. 
Moreover, it is challenging to employ key-based cryptographic techniques for WSNs, where the distribution, maintenance, and management of secret keys are expensive under dynamic and large-scale topologies.
In general, sensors are incapable of implementing complicated cryptographic algorithms due to low energy and computing power \cite{Chen2009Sensor}.
In this context, \emph{physical layer security} \cite{Le2021Physical}--\cite{Yang2015Safeguarding} has emerged as an appealing low-complexity paradigm to realize secure transmissions by exploiting wireless media characteristics, and it promises to be a powerful supplement or even  alternative to the cryptographic mechanisms for WSNs. 

\subsection{Background and Motivations}
The research of physical layer security dates back to as early as 1975 when Wyner postulated the information-theoretic foundation in his ground-breaking treatise \cite{Wyner1975The}.  
Wyner introduced the degraded witetap channel model and showed that the reliability and secrecy of information delivery can be concurrently guaranteed with appropriate secrecy channel coding. 
This pioneering work has motivated substantial endeavors invested  in developing physical layer security during the past decade, from the viewpoints of both information theory and signal processing.

Early research on physical layer security have been more concentrated on point-to-point communication links. 
Fundamental information-theoretic limits and optimal secrecy signaling schemes have been investigated by exploiting the channel state information (CSI) of both the main channel (spanning from transmitter to the intended receiver)  and the wiretap channel (spanning from transmitter to the undesired receiver, or eavesdropper). 
When the eavesdropper's CSI is completely unavailable, Goel and Negi \cite{Goel2008Guaranteeing} proposed to radiate controllable artificial noise or jamming signals along with confidential information, through either centralized multiple antennas or distributed cooperative jammers, to degrade the wiretap channel while without impairing the main channel. 
With no need for the eavesdropper's CSI, the idea of artificial noise or cooperative jamming has opened a new avenue for enhancing physical layer security and has sparked a wave of innovation, e.g, see \cite{Zheng2015Multi}--\cite{Wang2020Energy}.

Different from the point-to-point scenarios, secure  communications in large-scale wireless networks suffer from severe interference caused by a large amount of concurrent transmissions, and therefore the security performance depends heavily on the network geometry and the locations of nodes in the network. 
Against this background, stochastic geometry theory has offered powerful tools to study large-scale wireless networks from a statistical point of view by modeling node positions as some spatial distributions like Poisson point process (PPP) \cite{Haenggi2009Stochastic}, and the research on physical layer security under a stochastic geometry framework has been extensively carried out recently.
For example, for large-scale ad hoc networks, Zhou \emph{et al.} \cite{Zhou2011Throughput} and Zhang \emph{et al.} \cite{Zhang2013Enhancing} respectively explored single- and multi-antenna secure transmissions and identified the tradeoff between reliability and secrecy against eavesdropping attacks. 
Zheng \emph{et al.} \cite{Zheng2017Safeguarding,Zheng2017Physical} explored the great benefit of full-duplex receiver jamming in enhancing the network-wide secrecy throughput and energy efficiency.
For multi-cell cellular networks, Wang \emph{et al.} \cite{Wang2013Physical} investigated the secure downlink transmissions and discussed the impact of cell association and the location information of mobile users. Geraci \emph{et al.} \cite{Geraci2014Physical} further evaluated the achievable secrecy rate with regularized channel inversion precoding under a massive multiple-input multiple-output (MIMO) system. Wang \emph{et al.} \cite{Wang2016Physical} comprehensively analyzed the network-wide secrecy for a multi-tier heterogeneous cellular network, where  a threshold-based mobile association policy was proposed to balance link quality and secrecy. 
Wang \emph{et al.} \cite{Wang2020Physical} further applied the artificial noise aided physical layer security to the cellular vehicle-to-everything (C-V2X) networks.
Interested readers are referred to \cite{Wang2016Physical_book} for a more thorough understanding of the physical layer security in random wireless networks under the stochastic geometry framework.

As mentioned previously, physical layer security is particularly important for WSNs, since employing traditional cryptographic mechanism is rather costly and difficult. Recently, physical layer security has been advocated to protect communications from eavesdropping for WSNs.
The majority of existing literature on physical layer security in WSNs has been concentrated on deterministic network geometry, i.e., ignoring the uncertainty of nodes' locations or large-scale path loss \cite{Marano2009Distributed}--\cite{Barcelo-Llado2014Amplify}.
Given that sensors are generally randomly scattered, Lee \emph{et al.} \cite{Lee2013Distributed} first introduced the concept of distributed network secrecy and quantified the secrecy throughput and energy consumption for a multilevel WSN using tools from stochastic geometry. 
Deng \emph{et al.} \cite{Deng2016Physical} further analyzed the average secrecy rate for a three-tier WSN.
However, these works only considered access technologies with orthogonal resource blocks (RBs). 

Random multiple access has the virtue of being highly convenient and flexible without requiring a complicated control scheduling, which is well-suited for the large-scale WSNs particularly when the system load is overly heavy.
Although studies of secure multiple access have been reported for various wiretap channel models, the results cannot be directly applied for large-scale WSNs with stochastic network geometry.
In recent years, the physical layer security of non-orthogonal multiple access (NOMA) for large-scale networks has received considerable attention, e.g.,  \cite{Liu2017Enhancing}--\cite{Zhang2018Enhancing}, but unfortunately, at present researchers have mainly focused on two-user pairing sharing the same RB.
In fact, random multiple access with non-orthogonal RBs will significantly hamper the analysis of channel statistics for large-scale WSNs, since we have to deal with the combined effect of channel fading, the random locations of external interfering sensors, as well as the uncertainty of the successive interference cancellation (SIC) based decoding order for internal sensors belong to the same FC. 
The intractability of analysis will in no doubt make it challenging to design schemes to optimize the network security performance in terms of e.g., sum secrecy throughput. Our research work aims to provide an analytical framework and design schemes to address the aforementioned problem.

\subsection{Our Work and Contributions}
In this paper, we study physical layer security for a large-scale WSN consisting of randomly deployed sensors and FCs, coexisting with randomly distributed eavesdroppers attempting to intercept the data broadcast by the sensors. 
We establish a joint analysis and design framework to evaluate the transmission reliability and secrecy and optimize the network-wide performance in terms of the sum secrecy throughput.
Our main contributions care summarized as follows:

\begin{itemize}
\item 
We propose a random multiple access strategy which associates each sensor to its nearest FC, and each FC randomly selects a certain number of sensors for data acquisition. 
We then propose an uncoordinated jamming scheme  to combat eavesdropping where those unscheduled sensors, who are not chosen for data collection at the current time slot, independently radiate jamming signals with a certain probability.
\item
We assume that each FC adopts  zero-forcing SIC (ZF-SIC) to decode the multiple streams of the scheduled sensors, where the decoding order is determined according to their distances to the associated FC.
We derive new closed-form expressions for the connection outage probability of a typical FC, leveraging tools from the stochastic geometry theory and order statistics.  We also provide analytical expressions for the secrecy outage probability of the typical FC, assuming that eavesdroppers employ the minimum mean square error (MMSE) receiver to demodulate signals and have a powerful multi-user detection capability.
\item
We formulate a problem of maximizing the sum secrecy throughput of the typical FC, imposing both reliability and secrecy constraints on each scheduled sensor. We jointly design the optimal parameters, including the code rates of the scheduled sensors and the jamming probability of the unscheduled sensors.
We also provide a computational-convenient sub-optimal solution by forcing each scheduled sensor to attain a target high level of reliability. We derive closed-form expressions for the optimal code rates, and we prove that the sum secrecy throughput is a quasi-concave function of the jamming probability, where the optimal jamming probability can be efficiently calculated via the bisection method.  	
\end{itemize}

\subsection{Organization and Notations}
The remainder of this paper is organized as follows. 
Section II describes the random multiple access WSN and the optimization problem of interest. 
Section III analyzes the connection and secrecy outage probabilities of the secure transmission of the scheduled sensors. Section IV details the sum secrecy throughput maximization, with both optimal and sub-optimal solutions provided. Section V concludes this paper.

\emph{Notations}: Bold uppercase (lowercase) letters denote matrices (column vectors). $|\cdot|$, $\|\cdot\|$, $(\cdot)^{\dagger}$, $(\cdot)^{\rm T}$, $\ln(\cdot)$, $\mathbb{P}\{\cdot\}$, and $\mathbb{E}_z[\cdot]$ denote the absolute value, Euclidean norm, conjugate, transpose, natural logarithm, probability, and the expectation over a random variable $z$, respectively.
$f_z(\cdot)$ and $\mathcal{F}_z(\cdot)$ denote the probability density function (PDF) and cumulative distribution function (CDF) of $z$, respectively. 
$\mathcal{CN}(\mu,v)$, ${\rm Exp}(\lambda)$, and ${\rm Gamma}(N,\lambda)$ denote the circularly symmetric complex Gaussian distribution with mean $\mu$ and variance $v$, the exponential distribution with parameter $\lambda$, and the gamma distribution with parameters $N$ and $\lambda$, respectively. $\mathbb{R}^{m\times n}$ and $\mathbb{C}^{m\times n}$ denote the $m\times n$ real and complex number
domains, respectively.
In addition, $\binom{n}{m}\triangleq\frac{n!}{m!(n-m)!}$ for integers $n>m\ge0$.

\section{Network Model and Problem Description}
We consider the issue of secure wireless transmissions for a large-scale WSN as illustrated in Fig. \ref{figSys}, where a large number of sensor nodes continually monitor the surrounding environment and report their observations to the FCs which are responsible for decision making, whilst the ongoing data uploading is overheard by eavesdroppers hiding in the network. We assume that the sensors, FCs, and eavesdroppers are all spatially randomly positioned, and their locations are modeled as independent homogeneous PPPs $\Phi_s$, $\Phi_c$, and $\Phi_e$ in a two-dimensional plane $\mathbb{R}^2$, with spatial densities $\lambda_s$, $\lambda_c$, and $\lambda_e$, respectively.\footnote{Throughout this paper, we have a slight abuse with the notation $\Phi$, which is used to represent the set of nodes' locations as well as the nodes themselves.}

\begin{figure}[!ht]
	\centering
	\includegraphics[width=3.5in]{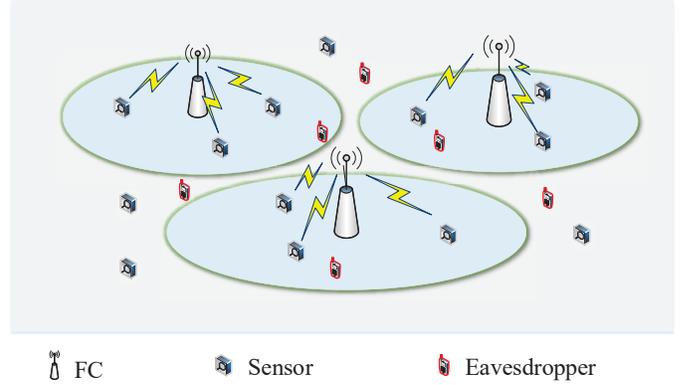}
	\caption{Illustration of a large-scale security-oriented WSN. A great quantity of FCs are deployed in the network (three FCs in the figure), each of which collects different categories of environmental information from a certain number (three sensor nodes as a group for the same FC in the figure) of near sensors nodes in the presence of numerous randomly located eavesdroppers (six eavesdroppers in the figure).}
	\label{figSys}
\end{figure}

\subsection{Channel Model}
We consider that the sensors are single-antenna devices due to hardware restrictions, and the FCs and eavesdroppers are equipped with $M_c$ and $M_e$ antennas, respectively, for achieving signal enhancement, interference suppression, etc. 
All the wireless channels are modeled by the combination of a frequency flat Rayleigh fading and a standard distance-based path loss. 
Hence, the channels from a sensor located at $x$ to an FC located at $y$ and to an eavesdropper located at $z$ are respectively characterized as $\bm h_{y,x} r_{y,x}^{-\alpha/2}$ and $\bm g_{z,x} r_{z,x}^{-\alpha/2}$, respectively. To be specific, $\bm h_{y,x}\in\mathbb{C}^{M_c\times 1}$ and $\bm g_{z,x}\in\mathbb{C}^{M_e\times 1}$ represent the small-scale fading channel vector with independent and identically distributed (i.i.d.) entries obeying the distribution $\mathcal{CN}(0,1)$, $r_{y,x}$ and $r_{z,x}$ denote the corresponding Euclidean distances, and $\alpha>2$ is the path-loss exponent.  

\subsection{Random Multiple Access}
We consider random multiple access scheduling, where each sensor is associated with its nearest FC, and each FC randomly chooses a set of $K<M_c$ sensors for data gathering at a given radio resource. Define $\mathcal{K}\triangleq\{1,2,\cdots,K\}$. Note that due to the ultra-dense deployment of sensors, we consider a plausible scenario where the density of sensors is much higher than $K$ times the density of FCs, i.e., $\lambda_s\gg K\lambda_c$, and there always exist more than $K$ sensors assigned to the same FC. 
Under this circumstance, all the sensors in a specific time slot can be naturally divided into two thinned PPPs, namely, the scheduled sensors $\Phi_{a}$ with density $\lambda_a=K\lambda_c$ which are communicating with their associated FCs and the unscheduled sensors $\Phi_{i}$ with density $\lambda_i=\lambda_s - K\lambda_c$ which remain silent, respectively.

\subsection{Uncoordinated Random Jamming}
We assume that each FC knows perfectly the instantaneous CSI regarding its $K$ scheduled sensors, whilst only has the statistical CSI of the other sensors and of the eavesdroppers.\footnote{Theoretically, an FC can obtain the perfect CSI of its scheduled sensors via channel estimation by letting them transmit orthogonal training sequences simultaneously.} 
In order to combat eavesdropping effectively while avoiding bringing severe interference to the FCs, an uncoordinated random jamming scheme is proposed, in which the unscheduled sensors radiate jamming signals at a probability $\rho\in[0,1]$.
By doing this, the distribution of the jamming sensors follows a PPP $\Phi_j$ with density $\lambda_j = \rho\lambda_i$.
 
	We emphasize that the proposed random jamming scheme is suitable for the energy-limited sensor networks owing to its low-level collaboration. This is fundamentally different from those higher-level collaboration schemes such as coordinated ZF jamming, which will cause high overhead and implementation complexity due to information sharing, beamformer design, and synchronization.
	Moreover, the jamming probability $\rho$ is carefully designed off line for maximizing the network security performance, as will be discussed in Sec. IV, and hence our scheme can balance well between network performance and complexity.

\subsection{Multi-Stream Decoding}
At the FC side, the ZF-SIC method is employed to separate the multiple data streams received from its associated $K$ sensors.\footnote{ZF is a typical linear filter for multi-user communication systems, and ZF-SIC is commonly used in an NOMA system to achieve SIC due to its ease of implementation and low computational complexity \cite{Jiang2017ZF,Jiang2018SIC}. In this sense, ZF-SIC is more subtable for the large-scale random multiple access WSN compared with more advanced but complicated methods, e.g., MMSE-SIC.} 
Theoretically, the SIC order should be sorted according to the instantaneous received signal strengths from the strongest to the weakest.
However, given that the impact of large-scale path loss is generally more dominant on the channel impairment and is more stable compared with the small-scale channel fading, we schedule the SIC order based on the sensors' distances to the FC from the nearest to the farthest. 
To be more specific, the procedure of ZF-SIC can be described as follows: 1) first decode the signal received from the nearest sensor by removing the aggregated signals received from the $K-1$ farther sensors through projecting these signals on to the null space of the instantaneous channel of the nearest sensor, 2) then cancel the decoded signal from the composite received signals, and 3) successively decode the signal from the second nearest sensor in a similar way, and so on. 
After the ZF-SIC operation, when decoding the signal from any specific sensor, the FC can successfully eliminate the interfering signals generated by the other $K-1$ sensors.

\subsection{Performance Metrics and Optimization Problem}
In order to secure the data transmission, Wyner's secrecy-preserving channel code, generally known as the wiretap code, is employed. 
In the wiretap code, the rates of the transmitted codewords and the embedded confidential messages are represented by the codeword rate $R_t$ and the secrecy rate $R_s$, respectively. 
The rate redundancy $R_e \triangleq R_t - R_s$ is intentionally introduced for guaranteeing secrecy against eavesdropping attacks. 
If the capacity of the main channel falls below the codeword rate $R_t$, the desired receiver cannot recover the codeword correctly, which is regarded as connection outage, and the probability that this event happens is termed \emph{connection outage probability} (COP). 
If the capacity of the wiretap channel exceeds the rate redundancy $R_e$, perfect secrecy is not possible, which is considered to be secrecy outage, and the probability of this event occurring is referred to as \emph{secrecy outage probability} (SOP). 

Without loss of generality, we focus on a typical FC which is placed at the origin $o$ of the polar coordinate, and denote its $K$ scheduled sensors as $S_1,S_2,\cdots,S_K$ with an ascending sort order of their distances. 
The codeword rate and the secrecy rate of sensor $S_k$ are denoted as $R_{t,k}$ and $R_{s,k}$, respectively, and the corresponding rate redundancy is given by $R_{e,k} = R_{t,k}- R_{s,k}$.
The COP of $S_k$ can be defined as 
\begin{equation}\label{cop_def}
p_{co,k} \triangleq \mathbb{P}\{{\rm SINR}_{o,k}<\beta_{t,k}\},  \quad \forall k\in\mathcal{K},
\end{equation}
where ${\rm SINR}_{o,k}$ denotes the instantaneous signal-to-interference-plus-noise ratio (SINR) of the typical FC for resolving the signal from $S_k$, and $\beta_{t,k} \triangleq 2^{R_{t,k}}-1$ is the threshold SINR for connection outage. 

We consider the wiretap scenario in which the eavesdroppers do not collude with each other and decode messages individually.
In this case, a secrecy outage event takes place if only confidential information is leaked to the most threatening eavesdropper of the highest SINR.
Therefore, the SOP of $S_k$ can be defined as 
\begin{equation}\label{sop_def}
p_{so,k} \triangleq \mathbb{P}\left\{\max_{e\in\Phi_e}{\rm SINR}_{e,k}>\beta_{e,k}\right\},  \quad \forall k\in\mathcal{K},
\end{equation}
where ${\rm SINR}_{e,k}$ denotes the instantaneous SINR of the eavesdropper located at $e\in\Phi_e$ for intercepting the data from $S_k$, and $\beta_{e,k} \triangleq 2^{R_{e,k}}-1$ is the threshold SINR for secrecy outage. 

This paper uses the metric \emph{sum secrecy
throughput} to assess the capacity of  multi-access secure transmissions from the viewpoint of secrecy outage, where the sum secrecy throughput of a typical FC is defined as the total average successfully received confidential information bits from its $K$ scheduled sensors per second per Hertz per channel use subject to certain secrecy constraints.
Formally, the sum secrecy throughput can be formulated as 
\begin{equation}\label{sst_def}
\mathcal{T} = \sum_{k=1}^K R_{s,k}\left(1-p_{co,k}\right), ~ {\rm s.t.}~ p_{so,k}\le \epsilon, ~ \forall k\in\mathcal{K},
\end{equation}
where $\epsilon\in[0,1]$ is a prescribed threshold representing the maximal tolerable SOP.

In this paper, we aim to maximize the sum secrecy throughput $\mathcal{T} $ via jointly designing the wiretap code rates for each scheduled sensor (including the codeword rate $R_{t,k}$ and the secrecy rate $R_{s,k}$) and the jamming probability $\rho$ for the unscheduled sensors.
Before proceeding to the optimization problem, we will first derive analytical expressions for the COP $p_{co,k}$ and SOP $p_{so,k}$ for $k\in\mathcal{K}$ in the following section.

\section{Analyses of COP and SOP}
This section evaluates the reliability and secrecy performance of the large-scale WSN under investigation with random multiple access. Specifically, we will analyze in detail the COP $p_{co,k}$ and SOP $p_{so,k}$ of the secure transmission from the $k$-th nearest sensor $S_k$ to the typical FC located at the origin $o$, utilizing the stochastic geometry theory and order statistics.
For ease of notation, we define $\delta\triangleq{2}/{\alpha}$ and $\phi \triangleq \pi \Gamma(1+\delta) \Gamma(1-\delta)$ throughout the paper.

\subsection{General Results for COP}
Denote the locations of the $K$ sensors $S_k$ scheduled by the typical FC as $s_k$ for $k\in\mathcal{K}$, with an ascending order of their distances to the FC $L_{1}\leq L_{2},\cdots,\le L_{K}$.
Note that due to random multiple access and SIC order scheduling, the ordered distance $L_k$ is a random variable, the statistics of which is characterized by the following lemma. 
\begin{lemma}\label{lemma_lk}
	The PDF of the ordered distance $L_k$ from the typical  FC to the $k$-th nearest sensor is given by
	\begin{equation}\label{lk_pdf}
	f_{L_{k}}(r)=2k \binom{K}{k}\sum_{l=0}^{k-1}\binom{k-1}{l}(-1)^l\pi \lambda_cr e^{-\pi\lambda_cr^2(K-k+l+1)},
	\end{equation}
\end{lemma}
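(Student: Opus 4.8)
The plan is to recognise the right-hand side of \eqref{lk_pdf} as the density of the $k$-th order statistic of $K$ i.i.d.\ distances, and to obtain that common distance law from the geometry of the two independent PPPs. \emph{Deriving the marginal law.} First I would show that the distance from the typical FC to any one of its scheduled sensors has survival function $\mathbb P\{L>r\}=e^{-\pi\lambda_c r^2}$, hence CDF $\mathcal F_L(r)=1-e^{-\pi\lambda_c r^2}$ and PDF $f_L(r)=2\pi\lambda_c r\,e^{-\pi\lambda_c r^2}$. The key observation is that a scheduled sensor is, by the association rule, linked to its \emph{nearest} FC, which here is the typical FC; therefore its distance to the typical FC coincides with its distance to the closest point of $\Phi_c$. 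Since the sensor locations are independent of $\Phi_c$, stationarity together with Slivnyak's theorem gives this distance the nearest-neighbour law of a PPP of intensity $\lambda_c$, namely $\mathbb P\{\Phi_c(B(o,r))=0\}=e^{-\pi\lambda_c r^2}$. Because the $K$ sensors are picked uniformly at random, independently of how far they lie from the FC, and because the assumption $\lambda_s\gg K\lambda_c$ ensures such a pick always exists, this uniform thinning leaves the per-sensor distance law unchanged.

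\emph{Joint law.} Next I would treat the $K$ unordered distances of the scheduled sensors as i.i.d.\ copies of the variable above. This is where the real difficulty sits, and I expect it to be the main obstacle: the $K$ selected sensors all fall in the same Voronoi cell of $\Phi_c$, so their distances to the cell nucleus are exchangeable but not genuinely independent; the i.i.d.\ modelling is an approximation that becomes accurate precisely in the ultra-dense regime $\lambda_s\gg K\lambda_c$, where drawing $K$ points out of a very large cell population randomises away the residual dependence. Everything past this point is routine.

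\emph{Order statistics and final expansion.} Granting independence, I would apply the classical order-statistics identity: for $K$ i.i.d.\ variables with CDF $\mathcal F_L$ and PDF $f_L$, the $k$-th smallest has density
\[
f_{L_k}(r)=k\binom{K}{k}\,\mathcal F_L(r)^{k-1}\,[1-\mathcal F_L(r)]^{K-k}\,f_L(r),
\]
the combinatorial prefactor being $K!/[(k-1)!(K-k)!]$. Substituting $\mathcal F_L(r)=1-e^{-\pi\lambda_c r^2}$, $1-\mathcal F_L(r)=e^{-\pi\lambda_c r^2}$ and $f_L(r)=2\pi\lambda_c r\,e^{-\pi\lambda_c r^2}$, then expanding $(1-e^{-\pi\lambda_c r^2})^{k-1}=\sum_{l=0}^{k-1}\binom{k-1}{l}(-1)^l e^{-\pi\lambda_c r^2 l}$ by the binomial theorem and merging the exponentials through $e^{-\pi\lambda_c r^2 l}e^{-\pi\lambda_c r^2(K-k)}e^{-\pi\lambda_c r^2}=e^{-\pi\lambda_c r^2(K-k+l+1)}$, I recover \eqref{lk_pdf} verbatim, which finishes the proof.
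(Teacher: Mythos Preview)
Your proposal is correct and follows essentially the same route as the paper: invoke the nearest-FC distance law $f_L(r)=2\pi\lambda_c r e^{-\pi\lambda_c r^2}$, apply the standard $k$-th order-statistic density $k\binom{K}{k}\mathcal F_L^{k-1}(1-\mathcal F_L)^{K-k}f_L$, and binomially expand. If anything, you are more careful than the paper in flagging the i.i.d.\ approximation for the $K$ intra-cell distances and in writing out the expansion explicitly.
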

\begin{proof}
	The PDF of $L_k$ follows from  order statistics \cite{David2003Order},
	\begin{equation}
	f_{L_{k}}(r)=k\binom{K}{k} \mathcal{F}_{L}(r)^{k-1}\left[1-\mathcal{F}_{L}(r)\right]^{K-k} f_{L}(r),
	\end{equation}
	where $f_{L}(r)=2\pi r \lambda_ce^{-\pi\lambda_cr^2}$ and $\mathcal{F}_{L}(r)=1-e^{-\pi\lambda_cr^2}$ are the PDF and CDF of the unordered distance $L$ from a sensor to its nearest FC, respectively.
\end{proof}

According to the ZF-SIC decoding described in Sec. II-B, the instantaneous SINR of $S_k$ can be formulated as
\begin{equation}\label{sinr_ok}
{\rm SINR}_{o,k} = \frac{P_a|\bm w_{k}^{\rm T} \bm h_{o,s_k}|^2L_{k}^{-\alpha}}
{I_a+I_j+\omega},
\end{equation}
where $I_a=\sum_{x\in\Phi_a\setminus o}P_a|\bm w_{k}^{\rm T}\bm h_{o,x}|^2r_{o,x}^{-\alpha}$ denotes the interference generated by those sensors scheduled by the FCs other than the typical FC, 
$I_j = \sum_{y\in\Phi_j} P_j|\bm w_{k}^{\rm T}\bm h_{o,y}|^2r_{o,y}^{-\alpha}$ denotes the power of the aggregated jamming signal from the unscheduled sensors, with $P_a$ and $P_j$ being the transmit power of the information-bearing signals and the jamming signals, respectively, and $\omega$ denotes the power of the receiver noise.
Here, $\bm w_{k} = \frac{\bm U_k^{\dagger}\bm U_k^{\rm T}\bm h_{o,s_k}^{\dagger}}{\|\bm U_k^{\rm T}\bm h_{o,s_k}^{\dagger}\|}$ denotes the weight vector designed for the $k$-th sensor as per the ZF-MRC criterion, where $\bm U_k\in\mathbb{C}^{M_c\times M_k}$, with $M_k \triangleq M_c - K + k$, is the projection matrix onto the null space of the matrix $[\bm h_{{o,s_{k+1}}}, \cdots,\bm h_{o,s_K}]$ such that $\bm w_{k}^{\rm T}\bm h_{o,s_j}=0$ for $j>k$. Note that the columns of $\bm U_k$ constitute an orthogonal basis, and hence $|\bm w_{k}^{\rm T} \bm h_{o,s_k}|^2= \|\bm U_k^{\rm T}\bm h_{o,s_k}^{\dagger}\|^2$ and $|\bm w_{k}^{\rm T}\bm h_{o,x}|^2$ obey the gamma distribution ${\rm{Gamma}}(M_k,1)$ and the exponential distribution ${\rm Exp}(1)$, respectively.

The COP of sensor $S_k$ is defined in \eqref{cop_def} with ${\rm SINR}_{o,k}$ given in \eqref{sinr_ok}.
Note that the COP is affected by various uncertainties, including fading channels, node locations, as well as the decoding order.
In the following proposition, we provide an expression for the exact COP. 
\begin{proposition}\label{proposition_cop_general}
	The COP of the secure transmission from the $k$-th nearest sensor $S_k$ to the typical FC is given by
	\begin{align}\label{cop_general}
&	p_{co,k}=  1 -\pi\lambda_ck\binom{K}{k}\sum_{l=0}^{k-1}\binom{k-1}{l}\sum_{m=0}^{M_k-1}\sum_{p=0}^m\binom{m}{p}	\frac{(-1)^{l}}{m!}\times	\nonumber\\
	& 
	\left(\frac{\omega\beta_{t,k}}{P_a}\right)^{m-p}\left[\bm 1_{p=0}\Omega_{\frac{m\alpha}{2}}+\bm 1_{p\neq0}\sum_{n=1}^p\left(\delta\phi\lambda_o\beta_{t,k}^{\delta}\right)^n\Omega_{\mu}\Upsilon_{p,n}\right]	
	\end{align}
	where $\bm 1_{\mathcal{H}}$ is the indicator function with $\bm 1_{\mathcal{H}}=1$ when event $\mathcal{H}$ is true and $\bm 1_{\mathcal{H}}=0$ otherwise, $\lambda_o \triangleq \lambda_a+\left({P_j}/{P_a}\right)^\delta \lambda_j$,	$\mu = \frac{\alpha}{2}(m-p)+n$, $\Omega_{\mu}\triangleq \int_0^{\infty}x^{\mu}e^{-\tau_1x^{\alpha/2}-\tau_2x}dx$ with $\tau_1 = \omega\beta_{t,k}/P_a$ and $\tau_2=\phi\lambda_o\beta_{t,k}^{\delta}+\pi\lambda_c(K-k+l+1)$, and $\Upsilon_{p,n}\triangleq\sum_{\psi_j\in \mathrm{comb}\binom{p-1}{p-n}}\prod_
	{		\substack{
			q_{ij}\in\psi_j\\
			i=1,\cdots,p-n}	}
	\left[q_{ij}-\delta(q_{ij}-i+1)\right]$, with the convention that $\Upsilon_{p,p}=1$ for $p\ge1$.
The term $\mathrm{comb}\binom{p-1}{p-n}$ represents the set of all distinct subsets of the natural numbers $\{1,2,\cdots,p-1\}$ with cardinality $p-n$.
	The elements in each subset $\psi_j$  are sorted in an ascending order with $q_{ij}$ being the $i$-th element of $\psi_j$.
\end{proposition}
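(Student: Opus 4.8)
The plan is to compute the complementary probability $1-p_{co,k}=\mathbb{P}\{{\rm SINR}_{o,k}\ge\beta_{t,k}\}$ by peeling off the sources of randomness one layer at a time: first the desired-signal fading, then the additive noise, then the aggregate interference, and finally the ordered distance $L_k$. First I would condition on $L_k=r$. Writing $G\triangleq|\bm w_k^{\rm T}\bm h_{o,s_k}|^2$ and $s\triangleq\beta_{t,k}r^{\alpha}/P_a$, the event ${\rm SINR}_{o,k}\ge\beta_{t,k}$ is exactly $G\ge s(I_a+I_j+\omega)$. Since $G\sim{\rm Gamma}(M_k,1)$ with integer $M_k$, its tail obeys $\mathbb{P}\{G\ge g\}=e^{-g}\sum_{m=0}^{M_k-1}g^m/m!$; moreover, conditioned on the unit vector $\bm w_k$ the projections $\bm w_k^{\rm T}\bm h_{o,x}$ remain $\mathcal{CN}(0,1)$, so $I_a$ and $I_j$ are independent of $G$ and behave as Rayleigh-faded PPP interference fields. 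Taking expectations over $I_a,I_j$ and setting $I\triangleq I_a+I_j$ yields
\begin{equation*}
\mathbb{P}\{{\rm SINR}_{o,k}\ge\beta_{t,k}\mid L_k=r\}=e^{-s\omega}\sum_{m=0}^{M_k-1}\frac{1}{m!}\,\mathbb{E}\!\left[e^{-sI}\bigl(s(I+\omega)\bigr)^{m}\right],
\end{equation*}
and expanding $\bigl(s(I+\omega)\bigr)^{m}$ with the binomial theorem and pulling out the deterministic factors $(s\omega)^{m-p}$ reduces the whole problem to the quantities $\mathbb{E}[e^{-sI}(sI)^{p}]$, $p=0,\dots,m$.

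Next I would evaluate the Laplace transform of $I$. Using the independence of $\Phi_a$ and $\Phi_j$, the Rayleigh fading, the PPP probability generating functional, and the identity $\int_{\mathbb{R}^2}\!\bigl(1-(1+t\|v\|^{-\alpha})^{-1}\bigr)dv=\phi\,t^{\delta}$, I obtain $\mathcal{L}_I(\xi)\triangleq\mathbb{E}[e^{-\xi I}]=\exp(-A\xi^{\delta})$ with $A\triangleq\phi P_a^{\delta}\lambda_o$, the jamming power $P_j$ being absorbed into $\lambda_o=\lambda_a+(P_j/P_a)^{\delta}\lambda_j$; since $\alpha\delta=2$, at $\xi=s$ this equals $e^{-\phi\lambda_o\beta_{t,k}^{\delta}r^{2}}$. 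Because $\mathbb{E}[e^{-sI}I^{p}]=(-1)^{p}\mathcal{L}_I^{(p)}(s)$, the task becomes differentiating $e^{-A\xi^{\delta}}$ a total of $p$ times. Applying the product rule repeatedly, each of the $p$ differentiations either acts on the exponential — contributing a new factor $-A\delta\xi^{\delta-1}$ and raising the power count $n$ — or acts on the accumulated power of $\xi$, contributing that power as a scalar. I would then prove by induction on $p$ that $\mathcal{L}_I^{(p)}(\xi)=e^{-A\xi^{\delta}}\sum_{n=1}^{p}(-A\delta)^{n}\xi^{n\delta-p}(-1)^{p+n}\Upsilon_{p,n}$: the steps that act on a power of $\xi$ form a $(p-n)$-subset of $\{2,\dots,p\}$ (step $1$ must act on the exponential, since $\xi^{0}$ differentiates to $0$), which after the shift $q_{ij}\mapsto q_{ij}-1$ is exactly an element of $\mathrm{comb}\binom{p-1}{p-n}$, and tracking how the exponent of $\xi$ evolves shows that the scalar picked up at the $i$-th such step equals $-\bigl[q_{ij}-\delta(q_{ij}-i+1)\bigr]$, so the product over the $p-n$ such steps is $(-1)^{p-n}\prod_i\bigl[q_{ij}-\delta(q_{ij}-i+1)\bigr]$ and summing over subsets gives the coefficient $(-1)^{p+n}\Upsilon_{p,n}$. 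Substituting $\xi=s$ and multiplying by $s^{p}$ then collapses this to $\mathbb{E}[e^{-sI}(sI)^{p}]=e^{-\phi\lambda_o\beta_{t,k}^{\delta}r^{2}}\sum_{n=1}^{p}\bigl(\delta\phi\lambda_o\beta_{t,k}^{\delta}\bigr)^{n}r^{2n}\Upsilon_{p,n}$ for $p\ge1$, while it is simply $e^{-\phi\lambda_o\beta_{t,k}^{\delta}r^{2}}$ for $p=0$.

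Finally I would de-condition on $L_k$ via its PDF from Lemma~\ref{lemma_lk}. For the summand indexed by $l$, the factors $e^{-s\omega}=e^{-(\omega\beta_{t,k}/P_a)r^{\alpha}}$, the $e^{-\phi\lambda_o\beta_{t,k}^{\delta}r^{2}}$ from the Laplace transform, and the Gaussian-type weight $e^{-\pi\lambda_c(K-k+l+1)r^{2}}$ from $f_{L_k}$ combine, while the monomials $r^{\alpha(m-p)}$ (from the noise term) and $r^{2n}$ (from the derivative) merge; the substitution $x=r^{2}$ then turns every remaining $r$-integral into $\Omega_{\mu}=\int_0^{\infty}x^{\mu}e^{-\tau_1 x^{\alpha/2}-\tau_2 x}\,dx$ with $\tau_1=\omega\beta_{t,k}/P_a$, $\tau_2=\phi\lambda_o\beta_{t,k}^{\delta}+\pi\lambda_c(K-k+l+1)$ and $\mu=\tfrac{\alpha}{2}(m-p)+n$ (respectively $\mu=m\alpha/2$ in the $p=0$ term), and the $2\pi\lambda_c r\,dr$ in $f_{L_k}$ supplies the $\pi\lambda_c$ prefactor after the factor $\tfrac12$ from the change of variables. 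Assembling the nested sums over $l,m,p,n$ produces the claimed closed form, and $p_{co,k}=1-(\cdots)$. The main obstacle is the induction in the previous paragraph: carrying out the iterated-product-rule bookkeeping for $\mathcal{L}_I^{(p)}$ carefully enough to recognize the resulting coefficients as the combinatorial sums $\Upsilon_{p,n}$ over $\mathrm{comb}\binom{p-1}{p-n}$, signs included; the Laplace-transform evaluation and the final integration are then routine.
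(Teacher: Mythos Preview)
Your proposal is correct and follows essentially the same route as the paper's proof: condition on $L_k$, use the Gamma tail for $G$, binomially expand $(I+\omega)^m$, evaluate $\mathcal{L}_I(s)=e^{-\phi\lambda_o(P_as)^{\delta}}$ via the PGFL, convert $\mathbb{E}[I^pe^{-sI}]$ into $(-1)^p\mathcal{L}_I^{(p)}(s)$, and finally integrate against $f_{L_k}$ with the substitution $x=r^2$. The only difference is that the paper simply cites an existing formula (Hunter--Andrews--Weber, Eq.~(51)) for $\frac{d^p}{ds^p}e^{-As^{\delta}}$, whereas you propose to rederive it by the iterated-product-rule/induction bookkeeping; your combinatorial identification of the subsets in $\mathrm{comb}\binom{p-1}{p-n}$ and the factors $-[q_{ij}-\delta(q_{ij}-i+1)]$ is correct and makes the argument self-contained.
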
 
\begin{proof}
	Please refer to Appendix \ref{proof_proposition_cop_general}.
\end{proof}

Although \eqref{cop_general} seems difficult to analyze due to the existence of the integral term $\Omega_{\mu}$, 
it provides a general and accurate expression for the COP without requiring time-consuming simulations. More importantly, it can be used as a baseline for comparison with other approximate results. 
For a special case with $\alpha=4$, $\Omega_{\mu}$ can be simplified by \cite[Eq. (3.462.1)]{Gradshteyn2007Table} as the following practically closed form,
\begin{equation}\label{omega}
\Omega^{\alpha=4}_{\mu} = (2\tau_1)^{-\frac{\mu+1}{2}}\Gamma(\mu+1)\exp\left(\frac{\tau_2^2}{8\tau_1}\right)\mathcal{D}_{-\mu-1}\left(\frac{\tau_2}{\sqrt{2\mu_1}}\right),
\end{equation}
where $\mathcal{D}_{-\mu}(z)$ denotes the parabolic cylinder function 	\cite[Eq. (9.241.2)]{Gradshteyn2007Table}. 
Note that with \eqref{omega}, the new expression of the COP $p_{co,k}$ becomes rather computationally convenient which requires only the calculation or lookup of a $\mathcal{D}_{-\mu}(z)$ value.

\subsection{Interference-Limited Case for COP}
Owing to a large amount of uncoordinated concurrent transmissions in the network, the aggregate interference at a receiver generally dominates the thermal noise. 
Motivated by this fact, we turn to examine the interference-limited WSN by ignoring the receiver noise at the FC side.

The following corollary provides a closed-form expression for the COP $p_{co,k}$ considering the interference-limited case.
\begin{corollary}\label{corollary_cop_int}
	For the interference-limited WSN, the COP of $S_k$ is given by
	\begin{align}\label{cop_int}
	p_{co,k} = &1 - \pi\lambda_ck\binom{K}{k}\sum_{l=0}^{k-1}\binom{k-1}{l}\frac{(-1)^{l}}{\tau_2}\times\nonumber\\
	&
	\left(1+\sum_{m=1}^{M_k-1}\sum_{n=1}^{m}
	\frac{n!}{m!}\left(\frac{\delta\phi\lambda_o\beta_{t,k}^{\delta}}{\tau_2}\right)^n\Upsilon_{m,n}\right).
	\end{align}
\end{corollary}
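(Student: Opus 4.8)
The plan is to derive Corollary~\ref{corollary_cop_int} as a direct specialization of Proposition~\ref{proposition_cop_general}, by letting the receiver noise power $\omega\to 0$ and showing that this collapses the double summation over $m$ and $p$ in \eqref{cop_general} to a single sum admitting a closed form.

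First I would revisit the integral $\Omega_{\mu}=\int_0^{\infty}x^{\mu}e^{-\tau_1 x^{\alpha/2}-\tau_2 x}\,dx$ with $\tau_1=\omega\beta_{t,k}/P_a$. Setting $\omega=0$ forces $\tau_1=0$, so $\Omega_{\mu}$ reduces to $\int_0^{\infty}x^{\mu}e^{-\tau_2 x}\,dx=\Gamma(\mu+1)\tau_2^{-(\mu+1)}$ by the definition of the gamma function; in particular $\Omega_0=1/\tau_2$. Since $\tau_2=\phi\lambda_o\beta_{t,k}^{\delta}+\pi\lambda_c(K-k+l+1)$ contains no $\omega$, the parameter $\tau_2$ is unchanged by the limit, so no hidden noise dependence survives.

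Next I would exploit the prefactor $\left(\omega\beta_{t,k}/P_a\right)^{m-p}$ in \eqref{cop_general}: for $\omega=0$ it equals $0$ whenever $m>p$ and equals $1$ whenever $m=p$, so only the diagonal terms $p=m$ remain, and $\binom{m}{m}=1$. Splitting the surviving sum into the $m=0$ contribution, which through $\bm 1_{p=0}$ yields $\Omega_0=1/\tau_2$, and the terms $m\ge 1$, for which $p=m\ge 1$ activates $\bm 1_{p\ne 0}$ and forces $\mu=\tfrac{\alpha}{2}(m-p)+n=n$ so that $\Omega_n=n!\,\tau_2^{-(n+1)}$, I would then combine the two pieces. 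Factoring out $(-1)^l$ and the common $1/\tau_2$, and cancelling $\Gamma(n+1)=n!$ against part of $1/m!$, turns the inner sum into $\sum_{n=1}^{m}\tfrac{n!}{m!}\big(\delta\phi\lambda_o\beta_{t,k}^{\delta}/\tau_2\big)^{n}\Upsilon_{m,n}$, which is exactly the bracketed expression in \eqref{cop_int}.

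I do not expect a genuine obstacle: the argument only involves exchanging a finite sum with a limit, which is trivially valid, and one textbook gamma integral. The only points demanding care are the bookkeeping of the indicators $\bm 1_{p=0}$ and $\bm 1_{p\ne 0}$ at $m=p$, and confirming that $\tau_2$ is $\omega$-independent. As a cross-check I would also note that re-running the derivation of Appendix~\ref{proof_proposition_cop_general} with $\omega=0$ from the outset removes the $e^{-\tau_1 x^{\alpha/2}}$ factor from the Laplace transform of the aggregate interference, whence \eqref{cop_int} follows immediately; this serves as independent verification rather than the primary proof.
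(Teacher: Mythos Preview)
Your proposal is correct and mirrors the paper's own proof essentially step for step: set $\omega=0$ in \eqref{cop_general}, use $(\omega\beta_{t,k}/P_a)^{m-p}$ to collapse the $p$-sum to $p=m$, and evaluate $\Omega_\mu$ at $\tau_1=0$ via the gamma integral to get $\Omega_0=1/\tau_2$ and $\Omega_n=n!/\tau_2^{n+1}$. The paper's appendix is terser but does exactly this; your added remarks on the indicator bookkeeping and the $\omega$-independence of $\tau_2$ are accurate and harmless elaborations.
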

\begin{proof}
	Please refer to Appendix \ref{proof_corollary_cop_int}.
\end{proof}

It should be noted that although \eqref{cop_int} is in a closed form, the Diophantus equation therein still makes $p_{co,k}$ time-consuming to calculate when $M_k$ goes large. More importantly, the coupling of various parameters, including the number of scheduled sensors $K$, the SIC order index $k$, the number of receive antennas $M_k$, the COP threshold $\beta_{t,k}$, and the jamming probability $\rho$, makes $p_{co,k}$ complicated to analyze. 
In order to circumvent such a difficulty and facilitate the analysis, we focus on a practical requirement of high reliability and low latency. In particular, we examine
the secure transmission in the ultra low COP regime for each sensor. Thereby, we obtain a much more compact expression for $p_{co,k}$ in the following corollary.
\begin{corollary}\label{corollary_cop_low}
In the low COP regime with $p_{co,k}\rightarrow 0$, the COP of $S_k$ can be approximated by
	\begin{equation}\label{cop_low}
	p_{co,k}\approx \frac{\phi\lambda_o\beta_{t,k}^{\delta}}{\pi\lambda_c}\Lambda_k\Xi_{M_k},
	\end{equation}
	where $\Lambda_k = k\binom{K}{k}\sum_{l=0}^{k-1}\binom{k-1}{l}(-1)^{l}\frac{1}{(K-k+l+1)^{2}}$ and $\Xi_{M_k}=1+\sum_{m=1}^{M_k-1}\frac{1}{m!}\prod_{i=0}^{m-1}(i-\delta)$.
\end{corollary}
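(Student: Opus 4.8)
The plan is to start from the closed-form expression for the interference-limited COP in \eqref{cop_int} of Corollary~\ref{corollary_cop_int} and extract its leading-order behavior as the governing small parameter (essentially $\lambda_o\beta_{t,k}^{\delta}/\lambda_c$) tends to zero. First I would observe that $\tau_2 = \phi\lambda_o\beta_{t,k}^{\delta}+\pi\lambda_c(K-k+l+1)$, so that in the low-COP regime $\tau_2 \approx \pi\lambda_c(K-k+l+1)$ at leading order, while the ratio $\phi\lambda_o\beta_{t,k}^{\delta}/\tau_2 \to 0$. Hence in the double sum $\sum_{m=1}^{M_k-1}\sum_{n=1}^{m}\frac{n!}{m!}\big(\frac{\delta\phi\lambda_o\beta_{t,k}^{\delta}}{\tau_2}\big)^n\Upsilon_{m,n}$ every term carries at least one factor of this vanishing ratio; keeping only the $n=1$ contribution (and noting $\Upsilon_{m,1}=\prod_{i=1}^{m-1}[1-\delta(1)]\cdots$ must be identified explicitly, see below) gives the dominant correction, and all $n\ge 2$ terms are higher-order and may be dropped.

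Next I would handle the prefactor. Writing $p_{co,k} = 1 - \pi\lambda_c k\binom{K}{k}\sum_{l=0}^{k-1}\binom{k-1}{l}\frac{(-1)^l}{\tau_2}\big(1 + (\text{small})\big)$, I expand $\frac{1}{\tau_2} = \frac{1}{\pi\lambda_c(K-k+l+1)}\big(1 - \frac{\phi\lambda_o\beta_{t,k}^{\delta}}{\pi\lambda_c(K-k+l+1)} + \cdots\big)$. The key combinatorial fact I need is the normalization identity $k\binom{K}{k}\sum_{l=0}^{k-1}\binom{k-1}{l}\frac{(-1)^l}{K-k+l+1} = 1$, which makes the "$1$" inside the bracket cancel exactly against the leading $1$ in $1 - (\cdots)$, leaving only the first-order terms. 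Collecting these, the surviving $O(\lambda_o\beta_{t,k}^{\delta})$ piece from expanding $1/\tau_2$ produces the factor $\Lambda_k = k\binom{K}{k}\sum_{l=0}^{k-1}\binom{k-1}{l}(-1)^l\frac{1}{(K-k+l+1)^2}$, and the surviving $n=1$ piece of the double sum, after using the same identity and simplifying $\Upsilon_{m,1}$, produces the factor $\Xi_{M_k} = 1 + \sum_{m=1}^{M_k-1}\frac{1}{m!}\prod_{i=0}^{m-1}(i-\delta)$, giving \eqref{cop_low}.

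The step I expect to be the main obstacle is verifying that $\sum_{m=1}^{M_k-1}\frac{n!}{m!}\Upsilon_{m,1}$ collapses to $\sum_{m=1}^{M_k-1}\frac{1}{m!}\prod_{i=0}^{m-1}(i-\delta)$ and, relatedly, confirming the binomial identity $k\binom{K}{k}\sum_{l}\binom{k-1}{l}\frac{(-1)^l}{K-k+l+1}=1$. For the latter, I would use the beta-integral representation $\frac{1}{K-k+l+1} = \int_0^1 t^{K-k+l}\,dt$, interchange sum and integral, recognize $\sum_l \binom{k-1}{l}(-t)^l = (1-t)^{k-1}$, and then evaluate $k\binom{K}{k}\int_0^1 t^{K-k}(1-t)^{k-1}\,dt = k\binom{K}{k}B(K-k+1,k) = 1$ via the Beta function. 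For the $\Upsilon_{m,1}$ simplification I would unwind the definition: $\mathrm{comb}\binom{m-1}{m-1}$ is the single full subset $\{1,2,\dots,m-1\}$, so $q_{ij}=i$ and the product becomes $\prod_{i=1}^{m-1}[i-\delta(i-i+1)] = \prod_{i=1}^{m-1}(i-\delta)$; combined with the leftover $\delta$ from $(\delta\phi\lambda_o\beta_{t,k}^\delta)^1$ this reorganizes into $\prod_{i=0}^{m-1}(i-\delta)$ once the $i=0$ factor $(-\delta)$ is absorbed, matching $\Xi_{M_k}$. A final remark I would add is that, since \eqref{cop_int} is exact for the interference-limited network, the approximation in \eqref{cop_low} is a genuine first-order Taylor expansion whose relative error is $O(\lambda_o\beta_{t,k}^{\delta}/\lambda_c)$, which justifies the notation $p_{co,k}\to 0$.
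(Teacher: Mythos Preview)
Your approach is correct but differs from the paper's. The paper does not start from the closed form \eqref{cop_int}; instead it returns to the pre-expectation expression \eqref{pco_appendix} with $\omega=0$, writes $p_{co,k}=1-\mathbb{E}_{L_k}[\mathcal{Q}_k]$ with $\mathcal{Q}_k=e^{-\psi_k\beta_{t,k}^{\delta}}\big(1+\sum_{m,n}\frac{(\delta\psi_k\beta_{t,k}^{\delta})^n}{m!}\Upsilon_{m,n}\big)$ and $\psi_k=\phi\lambda_o L_k^2$, Taylor-expands in $\psi_k$ to first order to get $p_{co,k}\approx\mathbb{E}_{L_k}[\phi\lambda_o L_k^2\beta_{t,k}^{\delta}\Xi_{M_k}]$, and only then takes the expectation over $L_k$ using \eqref{lk_pdf}, which directly yields $\Lambda_k/(\pi\lambda_c)$ from $\mathbb{E}_{L_k}[L_k^2]$. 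In that route your binomial identity $k\binom{K}{k}\sum_l\binom{k-1}{l}(-1)^l/(K-k+l+1)=1$ never appears explicitly: it is hidden in the trivial fact that the PDF of $L_k$ integrates to one, so the leading ``$1$'' cancels automatically. Your route---expand the post-expectation closed form---is equally valid and arrives at the same answer; it just forces you to recognize that normalization identity by hand (your Beta-integral argument is fine) and to do a bit more bookkeeping. One small imprecision in your narrative: it is not that the $1/\tau_2$ expansion alone produces $\Lambda_k$ and the $n=1$ sum alone produces $\Xi_{M_k}$; rather, \emph{both} first-order contributions carry the same $1/c_l^2$ weight (with $c_l=\pi\lambda_c(K-k+l+1)$), whose sum over $l$ gives $\Lambda_k$, while their combined coefficient $1-\delta\sum_m\frac{1}{m!}\Upsilon_{m,1}$ is exactly $\Xi_{M_k}$ after your $\Upsilon_{m,1}=\prod_{i=1}^{m-1}(i-\delta)$ simplification and the sign absorption $(-\delta)=0-\delta$.
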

\begin{proof}
	Please refer to Appendix \ref{proof_corollary_cop_low}.
\end{proof}

It is worth noting that by means of the approximation given above, key parameters such as $K$, $k$, $M_k$, $\beta_{t,k}$, and $\rho$ are decoupled compared to \eqref{cop_int}, and various analytical relationships between the COP and the parameters can be extracted explicitly, some of which are particularly useful for the subsequent optimization of sum secrecy throughput. 
For example, it is clearly shown that $p_{co,k}$ increases as $\beta_{t,k}$ and $\rho$ become larger, as $\lambda_o$ is a monotonically increasing function of $\rho$. Meanwhile, it is as expected that $p_{co,k}$ decreases when $M_k$ grows since $\Xi_{M_k}\in(0,1)$ is monotonically decreasing with $M_k$.

\begin{figure}[!t]
	\centering
	\includegraphics[width= 3.5 in]{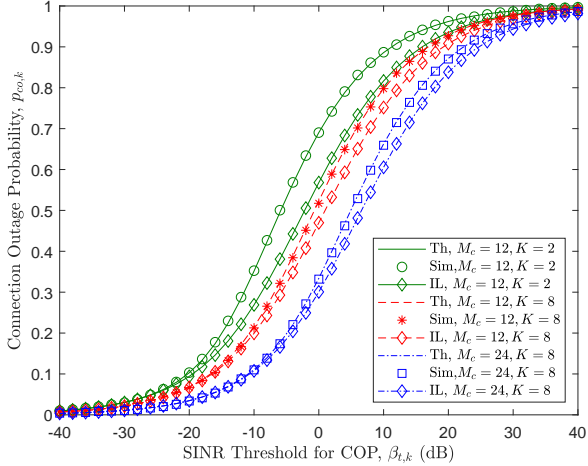}
	\caption{COP $ p_{co,k} $ at $k=1$ v.s. SINR threshold $ \beta_{t,k} $ for different $M_c$ and $K$, with $ P_j = 10 $ dBm, $ \lambda_c = 0.01 $, and $ \rho = 0.05 $. Unless otherwise specified, we always set $ P_a = 10 $ dBm, $ \omega = 0 $ dBm, $ \alpha = 4 $, and $ \lambda_s = 1 $. The labels \{Th, Sim, IL\} refer to the general theoretical result from \eqref{cop_general}, the Monte-Carlo simulated result, and the interference-limited result from \eqref{cop_int}, respectively. }
	\label{figCOP}
\end{figure}

Fig. \ref{figCOP} depicts the COP $ p_{co,k} $ versus the SINR threshold $ \beta_{t,k} $ for different values of the number $M_c$ of FC antennas and the number $K$ of sensor nodes associated with the same FC. Obviously, the Monte-Carlo simulation results are in good agreement with the exact theoretical values.
It is expected that $ p_{co,k} $ monotonically increases with $ \beta_{s,k} $, and the interference-limited results are always smaller than the general ones whereas the gaps are relatively small and even can be negligible as $M_c$, $K$, or $\beta_{t,k}$ goes large enough.  
We can observe that $p_{co,k}$ decreases with a smaller $K$ when $M_c$ is fixed or with a larger $M_c$ for a given $K$.
This indicates that once additional sensor nodes are connected to an FC, the reliability for each sensor node will be degraded, which however can be ameliorated by equipping the FC with more antennas.

\subsection{General Results for SOP}
From a robust secure transmission perspective, we are inclined to consider a worst-case scenario by overestimating the wiretap capability of eavesdroppers. Specifically, we assume that the eavesdroppers have powerful multi-stream decoding capabilities such that they can distinguish multiple data streams received from the scheduled sensors through subtracting interference generated by the superposed signals from each other. In this case, the aggregate interference received at the eavesdroppers only consists of the signals emitted by the jamming sensors.

We further assume that the eavesdroppers employ the optimal linear receiver, i.e., the MMSE receiver, to
improve the quality of the received signals. 
According to the MMSE criterion, the weight vector of the eavesdropper located at $e\in\Phi_e$ for decoding the signal from $S_k$ can be devised in the form of \cite{Gao1998Theoretical}
\begin{equation}\label{weight_mmse}
\bm w_{e,k} = \left(\bm \Psi_{e}+\omega\bm I_{M_e}\right)^{-1}\bm h_{e,s_k},
\end{equation}
where $\bm{\Psi}_{e} \triangleq \sum_{z \in \Phi_{j}} P_{j} \bm{h}_{e,z} \bm{h}_{e,z} ^{H} r_{e,z}^{-\alpha}$.
The SINR of the eavesdropper at $e$ can be given by
\begin{equation}\label{sinr_ek}
{\rm{SINR}}_{e,k}=P_{a} \bm{h}_{e,s_k}^{H} \left(\bm \Psi_{e}+\omega\bm I_{M_e}\right)^{-1} \bm{h}_{e,s_k} r_{e,s_k}^{-\alpha}.
\end{equation}

The SOP $p_{so,k}$ of $S_k$ is defined in \eqref{sop_def} with ${\rm{SINR}}_{e,k}$ given above. The following proposition provides a general result for $p_{so,k}$. 
\begin{proposition}\label{proposition_sop_general}
	The SOP of the secure transmission from the $k$-th nearest sensor $S_k$ to the typical FC is given by
\begin{align}\label{sop_general}
p_{so,k} = 1 - \exp\left(-\pi\lambda_e\sum_{m=1}^{M_e}\sum_{n=0}^{M_e-m}\frac{\zeta_1^{m-1}\zeta_2^n}{(m-1)!n!}\Omega^{\circ}_{u}\right),
\end{align}
where $\Omega^{\circ}_{u}$ has the same form as $\Omega_{\mu}$ defined in Proposition \ref{proposition_cop_general} simply with $\mu = u=\frac{\alpha}{2}(m-1)+n$, $\tau_1 =\zeta_1 ={\omega\beta_{e,k}}/{P_a} $, and $\tau_2=\zeta_2 = \phi\lambda_j\left({P_j\beta_{e,k}}/{P_a}\right)^{\delta}$.
\end{proposition}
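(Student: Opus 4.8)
The plan is to turn the secrecy‑outage event of \eqref{sop_def} into the complement of ``every eavesdropper fails'', i.e.
\[
p_{so,k}=1-\mathbb{P}\left\{\bigcap_{e\in\Phi_e}\{{\rm SINR}_{e,k}\le\beta_{e,k}\}\right\},
\]
and to evaluate the right‑hand side through the probability generating functional (PGFL) of the Poisson process $\Phi_e$. Conditioned on the jamming field $\Phi_j$ the per‑eavesdropper events are independent (the small‑scale channels $\bm h_{e,s_k}$ and $\{\bm h_{e,z}\}$ are independent across eavesdroppers), so, upon treating the aggregate jamming as seen independently by the eavesdroppers as is standard in such analyses, the PGFL yields
\[
\mathbb{P}\left\{\bigcap_{e\in\Phi_e}\{{\rm SINR}_{e,k}\le\beta_{e,k}\}\right\}=\exp\left(-\lambda_e\int_{\mathbb{R}^2}q(\|u\|)\,du\right),
\]
where $q(v)\triangleq\mathbb{P}\{{\rm SINR}_{e,k}>\beta_{e,k}\}$ is the intercept probability of an eavesdropper lying at distance $v$ from $S_k$, averaged over all fading and over $\Phi_j$ (by stationarity of $\Phi_e$ the actual location $s_k$ is immaterial). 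The task then splits into (i) finding $q(v)$ in closed form and (ii) performing the planar integral.

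For step (i) I would first condition on the jamming interference covariance $\bm\Psi_e$ and average over the desired channel $\bm h_{e,s_k}$, which has i.i.d.\ $\mathcal{CN}(0,1)$ entries. With the MMSE weight \eqref{weight_mmse}, ${\rm SINR}_{e,k}$ in \eqref{sinr_ek} is the quadratic form $P_a r_{e,s_k}^{-\alpha}\,\bm h_{e,s_k}^{H}(\bm\Psi_e+\omega\bm I_{M_e})^{-1}\bm h_{e,s_k}$; given $\bm\Psi_e$ this is a weighted sum of $M_e$ i.i.d.\ unit‑mean exponentials whose weights are the eigenvalues of $(\bm\Psi_e+\omega\bm I_{M_e})^{-1}$, and its conditional complementary CDF is exactly what the MMSE linear‑combining reliability characterization of \cite{Gao1998Theoretical} provides. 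I would then average the jamming small‑scale fading and the Poisson locations $\Phi_j$; the latter is a Laplace functional of a PPP with Rayleigh fading and path‑loss exponent $\alpha$, which is precisely what produces the factor $\zeta_2=\phi\lambda_j(P_j\beta_{e,k}/P_a)^{\delta}$. The expectation is expected to collapse to the compact Erlang‑type form
\[
q(v)=\mathbb{P}\{{\rm Gamma}(M_e,1)>\zeta_1 v^{\alpha}+\zeta_2 v^{2}\}=e^{-(\zeta_1 v^{\alpha}+\zeta_2 v^{2})}\sum_{j=0}^{M_e-1}\frac{(\zeta_1 v^{\alpha}+\zeta_2 v^{2})^{j}}{j!},
\]
with $\zeta_1=\omega\beta_{e,k}/P_a$.

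For step (ii) I would expand $(\zeta_1 v^{\alpha}+\zeta_2 v^{2})^{j}$ by the binomial theorem and relabel the two indices as $m-1$ (the exponent of $\zeta_1$) and $n$ (the exponent of $\zeta_2$), so that $(m-1)+n=j$ runs over $0,\dots,M_e-1$, i.e.\ $m\in\{1,\dots,M_e\}$ and $n\in\{0,\dots,M_e-m\}$, with multinomial coefficient $1/((m-1)!\,n!)$. Writing $\int_{\mathbb{R}^2}q(\|u\|)\,du=2\pi\int_0^{\infty}v\,q(v)\,dv$ and substituting $x=v^{2}$ turns each term into $\pi\,\frac{\zeta_1^{m-1}\zeta_2^{n}}{(m-1)!\,n!}\int_0^{\infty}x^{\frac{\alpha}{2}(m-1)+n}e^{-\zeta_1 x^{\alpha/2}-\zeta_2 x}\,dx=\pi\,\frac{\zeta_1^{m-1}\zeta_2^{n}}{(m-1)!\,n!}\,\Omega^{\circ}_{u}$ with $u=\frac{\alpha}{2}(m-1)+n$, and $p_{so,k}=1-\exp\!\big(-\pi\lambda_e\sum_{m}\sum_{n}\cdots\big)$ follows. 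These last manipulations reuse the same change of variables and series bookkeeping as in the proof of Proposition \ref{proposition_cop_general}, and $\Omega^{\circ}_{u}$ is deliberately left as an integral, exactly as $\Omega_{\mu}$ there (a truly closed form appearing only for $\alpha=4$).

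The main obstacle is step (i): obtaining the clean Gamma‑CCDF form of $q(v)$ requires marrying the distribution of the MMSE quadratic form $\bm h_{e,s_k}^{H}(\bm\Psi_e+\omega\bm I_{M_e})^{-1}\bm h_{e,s_k}$ — which has no elementary closed form for a fixed interference covariance — with the fact that $\bm\Psi_e$ is itself a random, Poisson‑cardinality sum of rank‑one terms whose positions and fading must both be averaged out; it is the combination of the \cite{Gao1998Theoretical} MMSE characterization with the PPP Laplace functional that makes the $M_e$ nested sums and the single residual integral $\Omega^{\circ}_{u}$ emerge. A secondary point one must be explicit about is the independence used in the PGFL step, since the jamming field $\Phi_j$ is common to all eavesdroppers; this is the standard simplification underlying the ``general result'' claimed in the proposition.
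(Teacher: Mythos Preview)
Your proposal is correct and follows essentially the same route as the paper: PGFL over $\Phi_e$, the MMSE SINR characterization of \cite{Gao1998Theoretical}, averaging over the Poisson jamming field $\Phi_j$, then the planar integral with the substitution $x=v^2$. One refinement worth noting: your compact intermediate identity $q(v)=\mathbb{P}\{\mathrm{Gamma}(M_e,1)>\zeta_1 v^{\alpha}+\zeta_2 v^{2}\}$ is correct and, after your binomial reindexing $(m-1)+n=j$, reproduces exactly the paper's double sum; however, the paper does not recognize this Gamma form directly but instead carries the \cite{Gao1998Theoretical} expression $e^{-\omega v}\sum_m \frac{(\omega v)^{m-1}}{(m-1)!}\mathbb{E}_{\Phi_j}[A_m(v)]$ and evaluates the expectation via the Campbell--Mecke theorem, because the numerator of $A_m(v)$ involves elementary symmetric polynomials in the interferer powers, i.e.\ sums over $n$-subsets of $\Phi_j$, which require higher-order factorial moments of the PPP rather than the Laplace functional alone. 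So where you write ``the latter is a Laplace functional of a PPP'', the actual tool needed (and used in the paper) is Campbell--Mecke; with that adjustment your sketch is complete and matches the paper's argument.
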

\begin{proof}
	Please refer to Appendix \ref{proof_proposition_sop_general}.
\end{proof}

Proposition \ref{proposition_sop_general} clearly shows that the SOP $p_{so,k}$ exponentially increases with the eavesdropper density $\lambda_e$. That is to say, secrecy is severely compromised when facing dense eavesdroppers.
Note that for the spacial case of $\alpha=4$,  $\Omega^{\circ}_{u}$ in \eqref{sop_general} can be recast into the same form of \eqref{omega}, which further leads to a practically closed-form expression for the SOP $p_{so,k}$.

\subsection{Interference-Limited Case for SOP}
Proposition \ref{proposition_sop_general} is not so straightforward for extracting key properties regarding the SOP, which motivates us to seek simplifications. 
To this end, we consider the interference-limited scenario, just as the discussion for the COP, where the receiver noise at eavesdroppers is ignored. Note that this is reasonable since the noise power of eavesdroppers is typically unknown to the sensors.

In the following corollary, we provide an analytically tractable expression for the SOP.
\begin{corollary}\label{corollary_sop}
	For the interference-limited WSN, the SOP of $S_k$ is given by
	\begin{equation}\label{int_sop}
	p_{so,k} = 1 - \exp\left(-\frac{\pi\lambda_e M_e}{\phi\rho\lambda_i}\left(\frac{P_a}{P_j\beta_{e,k}}\right)^{\delta}\right).
	\end{equation}
\end{corollary}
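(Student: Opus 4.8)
The plan is to obtain Corollary~\ref{corollary_sop} directly from Proposition~\ref{proposition_sop_general} by specializing to the noiseless (interference-limited) regime and then collapsing the resulting double sum. First I would take the limit $\omega\to 0$, equivalently $\zeta_1=\tau_1=\omega\beta_{e,k}/P_a\to 0^{+}$, in~\eqref{sop_general}. In the summand the factor $\zeta_1^{m-1}$ is identically $1$ for $m=1$ but tends to $0$ for every $m\ge 2$, while $\Omega^{\circ}_{u}$ stays bounded (it increases to the finite value $\int_0^\infty x^{u}e^{-\tau_2 x}\,dx$); hence only the $m=1$ slice of the outer sum survives in the limit, and since $u=\tfrac{\alpha}{2}(m-1)+n=n$ there, the argument of the exponential reduces to $\pi\lambda_e\sum_{n=0}^{M_e-1}\frac{\zeta_2^{\,n}}{n!}\lim_{\tau_1\to 0^{+}}\Omega^{\circ}_{n}$.

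Second, I would evaluate the limiting integral. From the definition $\Omega^{\circ}_{u}=\int_0^\infty x^{u}e^{-\tau_1 x^{\alpha/2}-\tau_2 x}\,dx$, letting $\tau_1\to 0^{+}$ and applying monotone convergence (the integrand is nonnegative, increasing in $\tau_1^{-1}$, and integrable at $\tau_1=0$) gives the elementary Gamma integral $\lim_{\tau_1\to 0^{+}}\Omega^{\circ}_{n}=\int_0^\infty x^{n}e^{-\zeta_2 x}\,dx=\Gamma(n+1)/\zeta_2^{\,n+1}=n!/\zeta_2^{\,n+1}$, using $\tau_2=\zeta_2$.

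Third, I would substitute this back: the $n$-th term becomes $\dfrac{\zeta_2^{\,n}}{n!}\cdot\dfrac{n!}{\zeta_2^{\,n+1}}=\dfrac{1}{\zeta_2}$, independent of $n$, so the inner sum telescopes to $M_e/\zeta_2$. Therefore
\begin{equation*}
p_{so,k}=1-\exp\!\left(-\frac{\pi\lambda_e M_e}{\zeta_2}\right),\qquad \zeta_2=\phi\lambda_j\!\left(\frac{P_j\beta_{e,k}}{P_a}\right)^{\!\delta}.
\end{equation*}
Finally, inserting $\lambda_j=\rho\lambda_i$ yields $\dfrac{\pi\lambda_e M_e}{\zeta_2}=\dfrac{\pi\lambda_e M_e}{\phi\rho\lambda_i}\Big(\dfrac{P_a}{P_j\beta_{e,k}}\Big)^{\delta}$, which is precisely~\eqref{int_sop}.

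There is essentially no hard step: once Proposition~\ref{proposition_sop_general} is in hand the argument is pure bookkeeping, and the only points needing a line of justification are that $m=1$ is forced in the limit by the $\zeta_1^{m-1}$ factor and that the $n$-sum collapses to $M_e/\zeta_2$. (A fully self-contained alternative would bypass Proposition~\ref{proposition_sop_general} and start from~\eqref{sinr_ek} with $\omega=0$, where the MMSE combiner gives $\mathrm{SINR}_{e,k}=P_a\,\bm h_{e,s_k}^{H}\bm\Psi_e^{-1}\bm h_{e,s_k}\,r_{e,s_k}^{-\alpha}$; one would then use the known closed form for the conditional complementary CDF of an $M_e$-branch MMSE SINR against a PPP of jammers, average over $r_{e,s_k}$, and apply the void probability / PGFL of $\Phi_e$ as in the proof of Proposition~\ref{proposition_sop_general}. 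This is more laborious and I would keep it only as a consistency check.)
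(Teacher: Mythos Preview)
Your proposal is correct and follows exactly the paper's own approach: the paper simply states that the result ``follows easily by plugging $\omega=0$ into \eqref{sop_general} and leveraging some algebraic operations,'' and your write-up makes those operations explicit (only $m=1$ survives since $\zeta_1^{m-1}\to 0$ for $m\ge 2$ while $\Omega^{\circ}_u$ stays finite, and the remaining $n$-sum collapses to $M_e/\zeta_2$).
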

\begin{proof}
	The result follows easily by plugging $\omega=0$ into \eqref{sop_general} and leveraging some algebraic operations.
\end{proof}

Corollary \ref{corollary_sop} reveals that the SOP $p_{so,k}$ exponentially increases with $M_e$ but decreases with $\rho$ and $\beta_{e,k}$. This indicates that secrecy performance is dramatically degraded if eavesdroppers use a large number $M_e$ of receiving antennas, whereas it can be significantly ameliorated by making a larger fraction $\rho$ of sensors send jamming signals and choosing a larger rate redundancy $R_{e,k}$ for channel coding. 
Note that this is fundamentally different from the case of COP where increasing the jammer fraction $\rho$ becomes harmful, which reflects an intrinsic trade-off between reliability and secrecy when introducing jamming signals.

\begin{figure}[!t]
	\centering
	\includegraphics[width= 3.5 in]{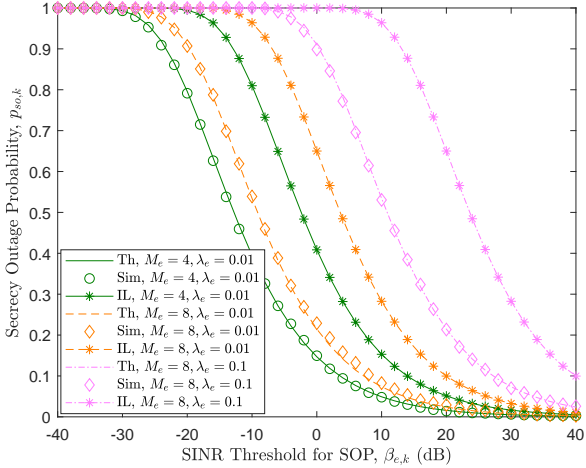}
	\caption{SOP $ p_{so,k} $ at $k=1$ v.s. SINR threshold $ \beta_{e,k} $ for different $M_e$ and $\lambda_e$, with $ P_j = 10 $ dBm, $ \lambda_c = 0.01 $, $ K = 3 $, and $ \rho = 0.05 $. The labels \{Th, Sim, IL\} refer to the general theoretical result from \eqref{sop_general}, the Monte-Carlo simulated result, and the interference-limited result from \eqref{int_sop}, respectively. }
	\label{figSOP}
\end{figure}

The monotonicity of the SOP $ p_{so,k} $ w.r.t. the  SINR threshold $ \beta_{e,k} $, the number $M_e$ of eavesdropping antennas, and the density $\lambda_e$ of eavesdroppers is validated by both numerical and simulated results as shown in Fig. \ref{figSOP}. 
Different from the situation of COP, the interference-limited SOP is apparently larger than the general one. This implies that the interference-limited SOP embodies an overestimation of eavesdropping capability, which is generally preferred when investigating physical layer security for the purpose of robustness designs.

\section{Maximization of Sum Secrecy Throughput}
This section maximizes the sum secrecy throughput for a typical FC in the large-scale WSN with random multiple access, by jointly determining the optimal parameters, including the codeword rate $R_{t,k}$ and the secrecy rate $R_{s,k}$ of the wiretap code for the $K$ scheduled sensors, and the jamming probability $\rho$ of the proposed random jamming scheme.
Recalling the definition of sum secrecy throughput in \eqref{sst_def}, the optimization problem can be formulated as 
\begin{subequations}\label{sst_max}
	\begin{align}
	\max_{R_{t,k},R_{s,k},\forall k\in\mathcal{K},\rho} ~&\mathcal{T} = \sum_{k=1}^K R_{s,k}\left(1-p_{co,k}\right)\\
	{\rm s.t.}~~ \label{sst_max_c1}
	&p_{co,k}\le\sigma,~ \forall k\in\mathcal{K},\\
	\label{sst_max_c2}
	&p_{so,k}\le\epsilon,~\forall k\in\mathcal{K},\\
	\label{sst_max_c3}
	&0\le R_{e,k} = R_{t,k}-R_{s,k},~ \forall k\in\mathcal{K},\\
	\label{sst_max_c4}
	&0\le \rho \le 1.
	\end{align}
\end{subequations}
Note that constraints \eqref{sst_max_c1} and \eqref{sst_max_c2} describe the reliability and secrecy requirements, respectively;
constraints \eqref{sst_max_c3} and  \eqref{sst_max_c4} are imposed by the wiretap code scheme and the random jamming scheme, respectively. 

The original problem \eqref{sst_max} can be decomposed into the following two subproblems. 
\begin{itemize}
	\item[1)] We first design the optimal $R_{s,k}$ and $R_{e,k}$ (or $R_{t,k}$) for the $k$-th scheduled sensor to maximize its secrecy throughput $\mathcal{T}_k \triangleq R_{s,k}\left(1-p_{co,k}\right)$ conditioned on a fixed $\rho$, as
		\begin{align}\label{opt_sst_max1}
		\max_{R_{s,k},R_{e,k}} &\mathcal{T}_k, ~\forall k\in\mathcal{K},
		~~{\rm s.t.}~~\eqref{sst_max_c1}-\eqref{sst_max_c3}.
		\end{align}
	\item[2)] With the resultant maximal $\mathcal{T}_k$ for $k\in\mathcal{K}$, we then design the optimal $\rho$ to maximize $\mathcal{T}$ expressed in \eqref{sst_max}, i.e.,
	\begin{align}\label{opt_sst_max2}
		\max_{\rho} ~\mathcal{T},~~
		{\rm s.t.}~~
			\eqref{sst_max_c4}.
	\end{align}
\end{itemize} 

In the following two subsections, we first discuss an optimal design scheme in which the optimal $R_{s,k}$ and $R_{e,k}$ can be efficiently calculated by the bisection method while the optimal $\rho$ can only be obtained by one-dimensional search. We then examine a sub-optimal scheme for the purpose of a low computational complexity, where closed-form expressions are derived for the optimal $R_{s,k}$ and $R_{e,k}$, and $\mathcal{T}$ is proved to be quasi-concave w.r.t. $\rho$ such that the optimal $\rho$ can be computed using the bisection method. 

\subsection{Optimal Design}
Based on the above discussion, we first examine the subproblem \eqref{opt_sst_max1} and design the optimal $R_{e,k}$ and $R_{s,k}$ successively. Consider a fixed $R_{s,k}$, it is apparent from \eqref{cop_def} that the COP $p_{co,k}$ monotonically decreases with $R_{e,k}$ as $R_{t,k}=R_{s,k}+R_{e,k}$. 
This suggests that the optimal $R_{e,k}$ for maximizing $\mathcal{T}_{k}$ should be the minimal $R_{e,k}$ while satisfying the secrecy constraint $p_{so,k}\le\epsilon$. Note that  $p_{so,k}$ decreases with $R_{e,k}$ (since $\beta_{e,k}=2^{R_{e,k}}-1$ shown in \eqref{sop_def}), the optimal $R_{e,k}$ is given as the inverse of
$p_{so,k}(R_{e,k})$ at $\epsilon$, which is
\begin{equation}
R^*_{e,k} = p_{so,k}^{-1}(\epsilon).
\end{equation}
Obviously, $R^*_{e,k}$ monotonically decreases with $\epsilon$, which means that a larger rate redundancy is required to combat the eavesdropper in order to meet a more rigorous secrecy constraint. 
Although it is intractable to
express $R^*_{e,k}$ in an explicit form due to the complicated expression of $p_{so,k}$, the value of $R^*_{e,k}$ can be efficiently obtained through bisection search with the equation $p_{so,k}(R_{e,k})=\epsilon$.

For designing the optimal $R_{s,k}$, we focus on the low COP regime and substitute the approximate COP $p_{co,k}$ given in \eqref{cop_low} into problem \eqref{opt_sst_max1}. Moreover, since  $R_{t,k} = R_{e,k}^{*}+R_{s,k}\Rightarrow\beta_{t,k}=\beta_{e,k}^{*}+(1+\beta_{e,k}^{*})\beta_{s,k}$ with $\beta_{s,k}\triangleq 2^{R_{s,k}}-1$, problem \eqref{opt_sst_max1} can be equivalently translated into  
\begin{align}\label{st_max}
\max_{0\le\beta_{s,k}\le \beta^{\rm max}_{s,k}}~ \mathcal{T}_k=\left(1-A_k\left(\beta_{s,k}+B_k\right)^{\delta}\right)\log_2(1+\beta_{s,k}),
\end{align}
where $A_k\triangleq\frac{\phi\lambda_o}{\pi\lambda_c}\Lambda_k\Xi_{M_k}$ with $\Lambda_k$ and $\Xi_{M_k}$ defined in Corollary \ref{corollary_cop_low}, $B_k \triangleq \frac{\beta^*_{e,k}}{1+\beta^*_{e,k}}$, and $\beta^{\rm max}_{s,k}\triangleq A_k^{-\alpha/2}-B_k$.
It is noteworthy that $\beta^{\rm max}_{s,k}$ is introduced to guarantee a non-negative value of $\mathcal{T}_k$.
The solution to the above problem is provided by the following proposition.
\begin{proposition}\label{proposition_st_opt}
	The secrecy throughput $\mathcal{T}_k$ given in \eqref{st_max} is a concave function of $\beta_{s,k}$, and the optimal $\beta_{s,k}^*$ maximizing $\mathcal{T}_k$ satisfies the following equation, 
	\begin{equation}\label{opt_rs}
	\frac{d\mathcal{T}_k}{d\beta_{s,k}}|_{\beta_{s,k}=\beta_{s,k}^*} = 0,
	\end{equation}
	i.e., it is the unique zero-crossing point $\beta_{s,k}$ of the derivative $\frac{d\mathcal{T}_k}{d\beta_{s,k}}$ given below
	\begin{equation}\label{opt_bs_db_eqn}
\frac{d\mathcal{T}_k}{d\beta_{s,k}}= \frac{1-A_k(\beta_{s,k}+B_k)^{\delta}}{(1+\beta_{s,k})\ln2}-\frac{A_k\delta\log_2(1+\beta_{s,k})}{(\beta_{s,k}+B_k)^{1-\delta}}.
	\end{equation}
\end{proposition}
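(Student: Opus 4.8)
The plan is to establish concavity of $\mathcal{T}_k$ on the feasible interval $[0,\beta^{\rm max}_{s,k}]$ and then argue that the first-order condition pins down a unique maximizer. First I would write $\mathcal{T}_k(\beta_{s,k}) = g(\beta_{s,k})\,h(\beta_{s,k})$ where $g(\beta_{s,k}) = 1 - A_k(\beta_{s,k}+B_k)^{\delta}$ is the (approximate) reliability factor and $h(\beta_{s,k}) = \log_2(1+\beta_{s,k})$ is the secrecy rate. Recall $\delta = 2/\alpha \in (0,1)$ and $A_k,B_k > 0$. On $[0,\beta^{\rm max}_{s,k}]$ we have $g \ge 0$, $g$ decreasing and concave (since $t\mapsto t^{\delta}$ is concave and increasing, so $-A_k(\cdot+B_k)^{\delta}$ is concave), while $h \ge 0$, $h$ increasing and concave. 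A product of two nonnegative concave functions need not be concave in general, so I would differentiate twice directly rather than invoke a product rule for concavity.

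The key computation is to form $\frac{d\mathcal{T}_k}{d\beta_{s,k}}$, verify it equals the expression in \eqref{opt_bs_db_eqn}, and then compute $\frac{d^2\mathcal{T}_k}{d\beta_{s,k}^2}$. Writing $\mathcal{T}_k' = g'h + gh'$, we get $\mathcal{T}_k'' = g''h + 2g'h' + gh''$. Now on the feasible interval: $g'' = -A_k\delta(\delta-1)(\beta_{s,k}+B_k)^{\delta-2} < 0$ (since $\delta < 1$) and $h \ge 0$, so $g''h \le 0$; $g' = -A_k\delta(\beta_{s,k}+B_k)^{\delta-1} < 0$ and $h' = \frac{1}{(1+\beta_{s,k})\ln 2} > 0$, so $2g'h' < 0$; and $g \ge 0$ with $h'' = -\frac{1}{(1+\beta_{s,k})^2\ln 2} < 0$, so $gh'' \le 0$. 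Hence $\mathcal{T}_k'' < 0$ strictly on $[0,\beta^{\rm max}_{s,k}]$, establishing strict concavity. I expect this sign-chasing to be the bulk of the work, and the main (minor) obstacle is simply being careful that every factor keeps its sign across the whole interval — in particular that $g \ge 0$ there, which is exactly why $\beta^{\rm max}_{s,k} = A_k^{-\alpha/2} - B_k$ was defined as the upper endpoint (at $\beta_{s,k} = \beta^{\rm max}_{s,k}$ one has $A_k(\beta_{s,k}+B_k)^{\delta} = A_k (A_k^{-\alpha/2})^{2/\alpha} = 1$, so $g = 0$).

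With strict concavity in hand, the maximizer is characterized by a first-order condition. I would check the boundary behavior: at $\beta_{s,k} = 0$, $\mathcal{T}_k = 0$ and $\mathcal{T}_k'(0) = \frac{1 - A_k B_k^{\delta}}{\ln 2} > 0$ provided $A_k B_k^{\delta} < 1$ (which holds whenever the problem is nontrivial, i.e. $\beta^{\rm max}_{s,k} > 0$, since $g(0) = 1 - A_k B_k^{\delta} > g(\beta^{\rm max}_{s,k}) = 0$ by monotonicity of $g$); at $\beta_{s,k} = \beta^{\rm max}_{s,k}$, $\mathcal{T}_k = 0$ again while $\mathcal{T}_k' < 0$ there (the second term of \eqref{opt_bs_db_eqn} is negative and the first vanishes). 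Therefore $\mathcal{T}_k'$ is positive at the left endpoint and negative at the right endpoint of the feasible interval; by strict concavity $\mathcal{T}_k'$ is strictly decreasing, so it has exactly one zero $\beta_{s,k}^* \in (0,\beta^{\rm max}_{s,k})$, which is the unique global maximizer and satisfies \eqref{opt_rs}. Finally I would note that because $\mathcal{T}_k'$ is continuous and strictly monotone, this root can be located numerically by bisection, matching the claim made in the surrounding text. (If $\beta^{\rm max}_{s,k} \le 0$, the feasible set is trivial and $\mathcal{T}_k \equiv 0$; this degenerate case can be dismissed in one line.)
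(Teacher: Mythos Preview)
Your argument has a sign error that breaks the term-by-term analysis. You claim $g(\beta) = 1 - A_k(\beta+B_k)^\delta$ is concave, reasoning that $t \mapsto t^\delta$ is concave so $-A_k(\cdot+B_k)^\delta$ is concave. But multiplying a concave function by a negative constant flips concavity: $-A_k t^\delta$ is \emph{convex} for $A_k > 0$. Concretely,
\[
g'' = -A_k\delta(\delta-1)(\beta+B_k)^{\delta-2} = A_k\delta(1-\delta)(\beta+B_k)^{\delta-2} > 0,
\]
not $<0$. Hence in $\mathcal{T}_k'' = g''h + 2g'h' + gh''$ the first term is nonnegative, and you can no longer conclude $\mathcal{T}_k'' < 0$ by sign-chasing alone.

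The paper confronts exactly this difficulty: it states that determining the sign of $\mathcal{T}_k''$ directly is intractable, and instead argues as follows. First it checks (as you do) that $\mathcal{T}_k'(0) > 0$ and $\mathcal{T}_k'(\beta^{\rm max}_{s,k}) < 0$, so at least one zero $\beta^\circ$ of $\mathcal{T}_k'$ exists. Then it evaluates $\mathcal{T}_k''$ \emph{only at such a zero}, using the first-order relation
\[
\frac{1 - A(\beta^\circ+B)^\delta}{(1+\beta^\circ)} = A\delta(\beta^\circ+B)^{\delta-1}\ln(1+\beta^\circ)
\]
to substitute away the mixed-sign term. After this substitution and the elementary bound $\ln(1+\beta^\circ) \le \beta^\circ$, one obtains $\mathcal{T}_k''(\beta^\circ) < 0$. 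Since every critical point is a strict local maximum, there can be only one, and the function is quasi-concave with a unique maximizer. (Strictly speaking the paper establishes quasi-concavity, which is all that is needed for the unique zero-crossing and the bisection search; the word ``concave'' in the proposition statement is a slight overstatement.) Your boundary analysis and the bisection remark are fine; the missing ingredient is this substitution trick at the critical point.
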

\begin{proof}
	For brevity, the subscripts of $\beta_{s,k}$, $A_k$, and $B_k$ are omitted.
	It is intractable to prove the concavity of $\mathcal{T}_k$ on $\beta$ by determining the sign of the second-order derivative $\frac{d^2\mathcal{T}_k}{d\beta^2}$.
Instead, it can be easily confirmed that the two boundary values of $\beta$ yield $\frac{d\mathcal{T}_k}{d\beta}|_{\beta=0}>0$ and $\frac{d\mathcal{T}_k}{d\beta}|_{\beta=\beta^{\rm max}_{s,k}}<0$.
Combined with the fact that $\mathcal{T}_k$ is continuously differentiable on $\beta$, there at least exists one zero-crossing point of  $\frac{d\mathcal{T}_k}{d\beta}$.
Let $\beta^{\circ}$ denote an arbitrary one such that $\frac{d\mathcal{T}_k}{d\beta}|_{\beta=\beta^{\circ}}=0$, and then the second-order derivative $\frac{d^2\mathcal{T}_k}{d\beta^2}$ at $\beta=\beta^{\circ}$ can be calculated as 
\begin{align}
&\frac{d^2\mathcal{T}_k}{d\beta^2}|_{\beta=\beta^{\circ}}=-\frac{2A\delta(\beta^{\circ}+B)^{\delta-1}}{(1+\beta^{\circ})\ln2}-\frac{1-A(\beta^{\circ}+B)^{\delta}}{(1+\beta^{\circ})^2\ln2}+\nonumber\\
&\quad\quad\quad\quad\quad\quad\quad\quad\quad\quad A\delta(1-\delta)(\beta^{\circ}+B)^{\delta-2}\log_2(1+\beta^{\circ})\nonumber\\
&\stackrel{\mathrm{(a)}}=\frac{A\delta(\beta^{\circ}+B)^{\delta-2}}{\ln2}\bigg((1-\delta)\ln(1+\beta^{\circ})-\nonumber\\
&\quad\quad\quad\quad\quad\quad\quad\quad\quad\quad\quad\quad\quad
\frac{(\beta^{\circ}+B)\left[2+\ln(1+\beta^{\circ})\right]}{1+\beta^{\circ}}\bigg)\nonumber\\
&\stackrel{\mathrm{(b)}}<
\frac{A\delta(\beta^{\circ}+B)^{\delta-2}}{\ln2}\left(\ln(1+\beta^{\circ})-\frac{\beta^{\circ}\left[1+\ln(1+\beta^{\circ})\right]}{1+\beta^{\circ}}\right)\nonumber\\
&=\frac{A\delta(\beta^{\circ}+B)^{\delta-2}}{(1+\beta^{\circ})\ln2}\left[{\ln(1+\beta^{\circ})}-{\beta^{\circ}}\right]\nonumber\\
&\stackrel{\mathrm{(c)}}\leq0,
\end{align}
where $\mathrm{(a)}$ holds by noting that $\frac{d\mathcal{T}_k}{d\beta}|_{\beta=\beta^{\circ}}=0\Rightarrow \frac{1-A(\beta^{\circ}+B)^{\delta}}{(1+\beta^{\circ})^2}=A\delta(\beta^{\circ}+B)^{\delta-1}\ln(1+\beta^{\circ})$ in \eqref{opt_bs_db_eqn}, $\mathrm{(b)}$ gives an upper bound, and $\mathrm{(c)}$ follows from the fact $\ln(1+\beta^{\circ})\le \beta^{\circ}$.
The above result indicates that $\mathcal{T}_k$ is a quasi-concave function of $\beta$ \cite{Boyd2004Convex}, and $\beta^{\circ}$ is the unique zero-crossing point of $\frac{d\mathcal{T}_k}{d\beta}$ and is also the solution to problem \eqref{st_max}.
\end{proof}

Due to the quasi-concavity of $\mathcal{T}_k$ on $\beta_{s,k}$, the value of the optimal $\beta_{s,k}^*$ can be efficiently calculated using the bisection method with \eqref{opt_rs}.
After that, by substituting the obtained optimal $R_{e,k}^*$ and $R_{s,k}^*$ for $k\in\mathcal{K}$ into \eqref{sst_max}, the optimal jamming probability $\rho^*$ can be numerically searched by solving problem \eqref{opt_sst_max2}.

\begin{figure}[!t]
	\centering
	\includegraphics[width= 3.5 in]{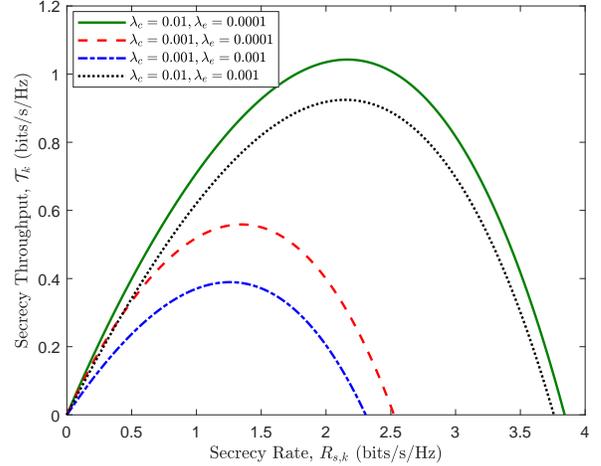}
	\caption{Secrecy throughput $ \mathcal{T}_k $ at $ k = 1 $ v.s. secrecy rate $ R_{s,k} $ for different $\lambda_c$ and $\lambda_e$, with $ P_j = 0 $ dBm, $ \rho = 0.01 $, $ M_c = 16 $, $ M_e  = 2$, $ K = 4 $, $ \epsilon = 0.1 $, and $ \sigma = 0.1 $.}
	\label{figTk}
\end{figure}

Fig. \ref{figTk} depicts the secrecy throughput $ \mathcal{T}_k $ as a function of the secrecy rate $ R_{s,k} $. Just as analyzed in Proposition \ref{proposition_st_opt}, we see that $ \mathcal{T}_k $ indeed initially increases and then decreases with $ R_{s,k} $, and there is a unique $ R_{s,k} $ for maximizing $ \mathcal{T}_k $. It is expected that increasing the density $\lambda_e$ of eavesdroppers is harmful to the improvement of secrecy throughput. We also observe that, with our proposed random access scheme, the secrecy throughput can be dramatically increased with the increase of FC density $ \lambda_c $. This seemingly counter-intuitive result can be understood if one can realize that although deploying more FCs will accommodate more sensor nodes resulting in more severe network interference, it will reduce the distances between the FC and its associated sensor nodes making transmission reliability significantly enhanced.

\subsection{Sub-optimal Design}
Note that the optimal code rates and optimal jamming probability for the optimal design scheme can only be obtained numerically via bisection search or exhaustive search, which not only results in high computational complexity but also makes it difficult to develop useful insights into practical system designs. 
To this end, this subsection examines a sub-optimal solution to problem \eqref{sst_max} by focusing on the sum secrecy throughput with COP constraints $p_{co,k}=\sigma$ and SOP constraints $p_{so,k}=\epsilon$ for $k\in\mathcal{K}$. 
The corresponding sum secrecy throughput can be written as 
\begin{equation}\label{sub_st}
\mathcal{T}=(1-\sigma)\sum_{k=1}^K\log_2\frac{1+\beta_{t,k}^*}{1+\beta_{e,k}^*},
\end{equation}
where $\beta_{t,k}^*$ and $\beta_{e,k}^*$ are the unique roots of $\beta_{t,k}$ and $\beta_{e,k}$ of the equations $p_{co,k}(\beta_{t,k})=\sigma$ and $p_{so,k}(\beta_{e,k})=\epsilon$, respectively.
The rationality of devising the sub-optimal design is that the values of $\sigma$ and $\epsilon$ are generally set small enough to  guarantee a high level of reliability and secrecy. 
Moreover, when the COP and SOP constraints can be controlled, we are able to maximize $\mathcal{T}$
by finding the optimal values of $\sigma$ and $\epsilon$ in \eqref{sub_st}.

By solving the equations $p_{co,k}(\beta_{t,k}^*)=\sigma$ and $p_{so,k}(\beta_{e,k}^*)=\epsilon$ recalling $p_{co,k}$ in \eqref{cop_low} and $p_{so,k}$ in \eqref{int_sop}, we can provide closed-form expressions for $\beta_{t,k}^*$ and $\beta_{e,k}^*$ (also $R_{t,k}^*$ and $R_{e,k}^*$) by the following proposition.
\begin{proposition}\label{value_bt_be}
	The values of $\beta_{t,k}^*$ and $\beta_{e,k}^*$ for $k\in\mathcal{K}$ that satisfy $p_{co,k}(\beta_{t,k}^*)=\sigma$ and $p_{so,k}(\beta_{e,k}^*)=\epsilon$ can be respectively given by
	\begin{equation}\label{bt}
	\beta_{t,k}^*=\left(\frac{\sigma\pi\lambda_c}{\lambda_a\phi\Lambda_k\Xi_{M_k}}\right)^{\frac{\alpha}{2}}\left[1+\frac{\lambda_iP_j^{\delta}}{\lambda_aP_a^{\delta}}\rho\right]^{-\frac{\alpha}{2}},
	\end{equation}
	\begin{equation}\label{be}
	\beta_{e,k}^*=\frac{P_a \rho^{-\frac{\alpha}{2}}}{P_j}\left(\frac{\pi\lambda_eM_e }{\phi\lambda_i \ln\frac{1}{1-\epsilon} }\right)^{\frac{\alpha}{2}}.
	\end{equation}
\end{proposition}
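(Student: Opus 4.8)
The plan is to invert the two outage expressions directly, since both approximate/interference-limited forms from Corollary~\ref{corollary_cop_low} and Corollary~\ref{corollary_sop} are monotone in the respective SINR thresholds and hence admit closed-form inverses. For $\beta_{e,k}^*$ I would start from the interference-limited SOP in \eqref{int_sop}, set $p_{so,k}(\beta_{e,k}^*)=\epsilon$, and solve for $\beta_{e,k}$. Rearranging $1-\exp\!\left(-\frac{\pi\lambda_e M_e}{\phi\rho\lambda_i}\left(\frac{P_a}{P_j\beta_{e,k}}\right)^{\delta}\right)=\epsilon$ gives $\frac{\pi\lambda_e M_e}{\phi\rho\lambda_i}\left(\frac{P_a}{P_j\beta_{e,k}}\right)^{\delta}=\ln\frac{1}{1-\epsilon}$, and then isolating $\beta_{e,k}$ (raising both sides to the power $1/\delta=\alpha/2$, and using $\lambda_j=\rho\lambda_i$) yields exactly \eqref{be}. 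This step is purely algebraic and essentially immediate.

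For $\beta_{t,k}^*$ I would use the low-COP approximation \eqref{cop_low}, i.e. $p_{co,k}\approx \frac{\phi\lambda_o\beta_{t,k}^{\delta}}{\pi\lambda_c}\Lambda_k\Xi_{M_k}$, set this equal to $\sigma$, and solve for $\beta_{t,k}$. This gives $\beta_{t,k}^{\delta}=\frac{\sigma\pi\lambda_c}{\phi\lambda_o\Lambda_k\Xi_{M_k}}$, hence $\beta_{t,k}^*=\left(\frac{\sigma\pi\lambda_c}{\phi\lambda_o\Lambda_k\Xi_{M_k}}\right)^{\alpha/2}$. The only remaining work is to expand $\lambda_o=\lambda_a+(P_j/P_a)^{\delta}\lambda_j=\lambda_a\left(1+\frac{\lambda_i P_j^{\delta}}{\lambda_a P_a^{\delta}}\rho\right)$ using $\lambda_j=\rho\lambda_i$, and factor $\lambda_a$ out of the denominator, which produces the stated form \eqref{bt} with the bracketed factor raised to $-\alpha/2$. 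Uniqueness of both roots follows because the right-hand sides of \eqref{cop_low} and \eqref{int_sop} are strictly monotone (increasing) in $\beta_{t,k}$ and $\beta_{e,k}$ respectively over the relevant range, so each equation has exactly one solution.

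There is essentially no hard obstacle here; the result is a direct consequence of the monotone closed-form outage expressions already established. The only mild subtlety worth flagging in the write-up is consistency of which COP expression is being inverted: \eqref{bt} is obtained from the low-COP approximation \eqref{cop_low} rather than the exact \eqref{cop_general} or the interference-limited \eqref{cop_int}, so the proof should state that $\beta_{t,k}^*$ is defined via \eqref{cop_low} (which is the form used throughout the sub-optimal design), while $\beta_{e,k}^*$ comes from the interference-limited SOP \eqref{int_sop}. I would also remark that both thresholds are well-defined and positive for $\sigma,\epsilon\in(0,1)$ and $\rho\in(0,1]$, and briefly note the induced forms $R_{t,k}^*=\log_2(1+\beta_{t,k}^*)$ and $R_{e,k}^*=\log_2(1+\beta_{e,k}^*)$ that feed into \eqref{sub_st}.
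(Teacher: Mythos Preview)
Your proposal is correct and matches the paper's approach exactly: the paper also obtains \eqref{bt} and \eqref{be} by directly inverting the low-COP approximation \eqref{cop_low} and the interference-limited SOP \eqref{int_sop}, treating the result as immediate without a detailed proof. Your added remarks on monotonicity and on which COP expression is being inverted are accurate and appropriate.
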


The next step is to design the optimal jamming probability $\rho$ to maximize the sum secrecy throughput $\mathcal{T}$.
Before proceeding to the optimization procedure, we introduce three auxiliary variables $X_k\triangleq \left(\frac{\sigma\pi\lambda_c}{\lambda_a\phi\Lambda_k\Xi_{M_k}}\right)^{\frac{\alpha}{2}}$, $Y\triangleq \frac{\lambda_iP_j^{\delta}}{\lambda_aP_a^{\delta}}$, and $Z\triangleq \frac{P_a}{P_j}\left(\frac{\pi\lambda_eM_e }{\phi\lambda_i \ln\frac{1}{1-\epsilon} }\right)^{\frac{\alpha}{2}}$ for \eqref{bt} and \eqref{be}, with which $\mathcal{T}$ in \eqref{sub_st} can be recast as a function of a single variable $\rho$ as given below:
\begin{equation}\label{sub_st_xyz}
\mathcal{T}=\sum_{k=1}^K \mathcal{T}_k=(1-\sigma)\sum_{k=1}^K\log_2\frac{1+X_k(1+Y\rho)^{-\frac{\alpha}{2}}}{1+Z\rho^{-\frac{\alpha}{2}}}.
\end{equation}

Remarkably, $X_k$, $Y$, and $Z$ have clear physical significance. 
Specifically, $X_k$ can be interpreted as the ability of boosting the achievable rate for the legitimate channel. 
For example, a looser COP constraint (a larger $\sigma$) and a larger number of receive antennas at the FC side (a smaller $\Xi_{M_k}$) will increase $X_k$ and are beneficial for improving transmission reliability. 
Similarly, $Z$ can be translated as the wiretapping capability and $Y$ reflects the jamming power level a sensor can afford. 

It is obvious that in order to guarantee a non-negative secrecy throughput for each sensor, i.e., $\mathcal{T}_k\ge 0$ for $k\in\mathcal{K}$, $X_k(1+Y\rho)^{-\frac{\alpha}{2}}>Z\rho^{-\frac{\alpha}{2}}$ must be satisfied from \eqref{sub_st_xyz}, which produces $\rho\ge \left(\left(\frac{X_k}{Z}\right)^\delta-Y\right)^{-1}$.
In other words, we should ensure $\rho\ge \rho_{\rm min}\triangleq\max_{k\in\mathcal{K}}\left(\left(\frac{X_k}{Z}\right)^\delta-Y\right)^{-1}$. Hence, the optimal $\rho^*$ maximizing $\mathcal{T}$ can be obtained by solving the following equivalent problem.
	\begin{align}\label{sub_st_max}
	\max_{\rho_{\rm min}\le\rho\le 1} ~T(\rho)\triangleq \sum_{k=1}^K\ln\frac{1+X_k(1+Y\rho)^{-\frac{\alpha}{2}}}{1+Z\rho^{-\frac{\alpha}{2}}}.
	\end{align} 
	
Although the above problem is not convex, in the following proposition we introduce a derivative reconstruction method to prove that $T(\rho)$ is actually first-increasing-then-decreasing w.r.t. $\rho$ such that the optimal $\rho^*$ maximizing $T(\rho)$ must be unique.
\begin{proposition}\label{proposition_subopt_rho}
	The objective function $T(\rho)$ in \eqref{sub_st_max} initially increases and then decreases with an increasing $\rho$, and the optimal $\rho^*$ that maximizes $T(\rho)$ is provided as
	\begin{align}\label{opt_rho}
	\rho^*=
	\begin{cases}
	\rho_{\rm min}, & Z<\frac{Y\rho_{\rm min}^{\alpha/2+1}\left(1-\kappa(\rho_{\rm min})\right)}{1+ Y\rho_{\rm min}\kappa(\rho_{\rm min})}\\
	1,& Z\ge \frac{Y(1-\kappa(1))}{1+Y\kappa(1)}\\
	\rho^{\circ}, & {\rm otherwise} 
	\end{cases}
	\end{align}
	where $\kappa(\rho)\triangleq\frac{1}{K}\sum_{k=1}^K\frac{1}{X_k\left(1+Y\rho\right)^{-\alpha/2}+1}<1$, and $\rho^{\circ}$ is the unique root $\rho$ of the equation $G(\rho)=0$ with $G(\rho)$ being a monotonically decreasing function of $\rho$ given by
	\begin{equation}\label{G}
	G(\rho)=1 + Y\rho \kappa(\rho)-\frac{Y\rho^{\alpha/2+1}}{Z}\left(1-\kappa(\rho)\right).
	\end{equation}
\end{proposition}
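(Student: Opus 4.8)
The plan is to analyze the sign of the derivative $T'(\rho)$ and show it has exactly one sign change, from positive to negative, on $[\rho_{\min},1]$. First I would differentiate $T(\rho)=\sum_{k=1}^K\ln\frac{1+X_k(1+Y\rho)^{-\alpha/2}}{1+Z\rho^{-\alpha/2}}$ term by term. The derivative of the numerator part contributes $-\frac{(\alpha/2)X_kY(1+Y\rho)^{-\alpha/2-1}}{1+X_k(1+Y\rho)^{-\alpha/2}}$ and the (common across $k$) denominator part contributes $+\frac{(\alpha/2)Z\rho^{-\alpha/2-1}}{1+Z\rho^{-\alpha/2}}$. Factoring out the positive constant $\alpha/2$, I would write $T'(\rho)$ as a sum of these two pieces. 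The key algebraic move is to rewrite each fraction so that the per-$k$ numerator term becomes $\frac{X_kY(1+Y\rho)^{-\alpha/2-1}}{1+X_k(1+Y\rho)^{-\alpha/2}}=\frac{Y}{1+Y\rho}\Bigl(1-\frac{1}{X_k(1+Y\rho)^{-\alpha/2}+1}\Bigr)$, which introduces precisely the quantity $\kappa(\rho)$ after averaging over $k$; similarly $\frac{Z\rho^{-\alpha/2-1}}{1+Z\rho^{-\alpha/2}}=\frac{1}{\rho}\cdot\frac{Z\rho^{-\alpha/2}}{1+Z\rho^{-\alpha/2}}$.

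Next I would show that the sign of $T'(\rho)$ coincides with the sign of the reconstructed function $G(\rho)$ defined in \eqref{G}. Concretely, after collecting terms, $T'(\rho)>0$ is equivalent (by clearing the strictly positive denominators $1+Y\rho$, $\rho$, and $1+Z\rho^{-\alpha/2}$) to an inequality that, after multiplying through by $\rho^{\alpha/2}$ and rearranging, reduces to $1+Y\rho\,\kappa(\rho)-\frac{Y\rho^{\alpha/2+1}}{Z}(1-\kappa(\rho))>0$, i.e. $G(\rho)>0$. This "derivative reconstruction" is the crux: it converts the quasi-concavity question into monotonicity of the single scalar function $G$. I would then verify that $G$ is strictly decreasing in $\rho$: the term $Y\rho\,\kappa(\rho)$ must be handled carefully since $\kappa$ itself depends on $\rho$ (note $\kappa(\rho)$ is decreasing in $\rho$ because $X_k(1+Y\rho)^{-\alpha/2}$ decreases, so $\frac{1}{X_k(1+Y\rho)^{-\alpha/2}+1}$ increases — wait, that makes $\kappa$ increasing), so I would instead argue $G$ is decreasing by grouping: write $Y\rho\kappa(\rho)=\frac{1}{K}\sum_k\frac{Y\rho}{X_k(1+Y\rho)^{-\alpha/2}+1}$ and show each summand minus $\frac{1}{K}\frac{Y\rho^{\alpha/2+1}}{Z}\bigl(1-\frac{1}{X_k(1+Y\rho)^{-\alpha/2}+1}\bigr)$ plus the $1/K$ share of the constant $1$ is decreasing, by direct differentiation and sign-checking of the resulting rational expression. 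The main obstacle is exactly this step — establishing monotonicity of $G$ when $\kappa$ is a nontrivial $\rho$-dependent average — and I expect it to require bounding the derivative of each summand, using $\alpha>2$ (so $\delta=2/\alpha<1$) and the fact that all $X_k,Y,Z>0$.

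Finally, with $G$ strictly decreasing, $G$ has at most one zero $\rho^\circ$ on $(0,\infty)$, and since $T'$ has the sign of $G$, the function $T$ is increasing on $\{\rho:G(\rho)>0\}$ and decreasing on $\{\rho:G(\rho)<0\}$ — hence first-increasing-then-decreasing on $[\rho_{\min},1]$, giving a unique maximizer. The three cases in \eqref{opt_rho} then follow by evaluating $G$ at the endpoints: if $G(\rho_{\min})\le0$ (equivalently $Z<\frac{Y\rho_{\min}^{\alpha/2+1}(1-\kappa(\rho_{\min}))}{1+Y\rho_{\min}\kappa(\rho_{\min})}$, obtained by solving $G(\rho_{\min})=0$ for $Z$ and noting $G$ decreasing) the maximizer is $\rho_{\min}$; if $G(1)\ge0$ (equivalently $Z\ge\frac{Y(1-\kappa(1))}{1+Y\kappa(1)}$) it is $\rho=1$; otherwise $G$ changes sign in the interior and the maximizer is the unique root $\rho^\circ$ of $G(\rho)=0$, locatable by bisection since $G$ is monotone. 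I would close by noting that the endpoint threshold conditions are simply the rearrangements of $G(\rho_{\min})\lessgtr0$ and $G(1)\lessgtr0$, which requires only algebra, not further analysis.
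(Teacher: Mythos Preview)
Your overall strategy matches the paper's proof exactly: compute $T'(\rho)$, factor out a strictly positive prefactor so that the sign of $T'(\rho)$ coincides with the sign of $G(\rho)$, establish that $G$ is strictly decreasing, and read off the three endpoint/interior cases. The endpoint conditions you derive are precisely those in the paper.

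The one place where you make life harder than necessary is the monotonicity of $G$. You correctly catch mid-stream that $\kappa(\rho)$ is \emph{increasing} (not decreasing), and then propose to brute-force the derivative of each per-$k$ summand. The paper avoids this grind with a clean product factorization: writing $W_k(\rho)=1+X_k(1+Y\rho)^{-\alpha/2}$ and $Q(\rho)=1+Z\rho^{-\alpha/2}$, one has
\[
G(\rho)=1-\sum_{k=1}^K G_{1,k}(\rho)\,G_{2,k}(\rho),\qquad
G_{1,k}(\rho)=\frac{Y\rho}{K\,W_k(\rho)},\qquad
G_{2,k}(\rho)=\frac{W_k(\rho)-Q(\rho)}{Q(\rho)-1}.
\]
Here $G_{1,k}$ is positive and increasing (numerator increases, denominator $W_k$ decreases), while $G_{2,k}$ simplifies to $\frac{X_k}{Z}\bigl(\rho^{-1}+Y\bigr)^{-\alpha/2}-1$, which is increasing and, crucially, \emph{nonnegative on the feasible region} $\rho\ge\rho_{\min}$ because $\rho_{\min}$ was defined precisely to ensure $W_k(\rho)\ge Q(\rho)$ for every $k$. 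Since each $G_{1,k}G_{2,k}$ is a product of nonnegative increasing functions, it is increasing, and hence $G$ is decreasing --- no derivative bounding or use of $\alpha>2$ is required. Your plan would work, but this factorization is the shortcut you were looking for.
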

\begin{proof}
	Please refer to Appendix \ref{proof_proposition_subopt_rho}.
\end{proof}

Some observations regarding the design of the optimal jamming probability $\rho^*$ can be obtained from Proposition \ref{proposition_subopt_rho}:

1) As previously explained, the variable $Z$ actually embodies the advantage that eavesdroppers can perform attacks. 
When such advantage is marginal, i.e., $Z<\frac{Y\rho_{\rm min}^{\alpha/2+1}\left(1-\kappa(\rho_{\rm min})\right)}{1+ Y\rho_{\rm min}\kappa(\rho_{\rm min})}$, it is not necessary to activate too many sensors to radiate jamming signals to confuse the eavesdroppers. 
In this case, we can simply set the optimal $\rho^*$ to its minimal achievable value $\rho_{\rm min}$. 
Nevertheless, the maximization of sum secrecy throughput comes at the expense of unfairness, since there exists at least one sensor whose secrecy throughput would be reduced to zero, i.e., the sensor with  index $k^{\circ}=\arg\max_{k\in\mathcal{K}}\left(\left(\frac{X_k}{Z}\right)^\delta-Y\right)^{-1}$.

2) If the eavesdroppers' superiority exceeds a certain level, i.e., $Z\ge \frac{Y(1-\kappa(1))}{1+Y\kappa(1)}$, all the idle sensors have to be mobilized for anti-eavesdropping.
Hence, $\rho^*=1$ is optimal for the sum secrecy throughput maximization.

 3) Beyond the above two situations, we should properly set the jamming probability $\rho$ to strike a good balance between throughput and secrecy. Although an explicit form of $\rho^{\circ}$ cannot be derived, we can still develop some useful properties on $\rho^{\circ}$ to guideline practical designs, as summarized in the following corollary.
 \begin{corollary}\label{corollary_opt_rho}
 	The optimal jamming probability $\rho^{\circ}$ decreases with the maximal endurable COP $\sigma$ and SOP $\epsilon$, the number $M_c$ of antennas at the FC side, the sensor density $\lambda_s$, and the ratio $P_j/P_a$ of jamming power to transmit power of a sensor, while increases with the FC density $\lambda_c$, the number $K$ of sensors associated with each FC, the eavesdropper density $\lambda_e$, and the number $M_e$ of antennas at the eavesdropper side.
 \end{corollary}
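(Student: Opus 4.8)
The plan is to regard $\rho^{\circ}$ as the root of $G(\rho)=0$ in \eqref{G} and to track how that root moves under a parameter perturbation. Proposition~\ref{proposition_subopt_rho} already establishes that $G$ is strictly decreasing in $\rho$ on the relevant interval, so for any continuous parameter $\theta$ the implicit function theorem yields $\mathrm{sgn}(\partial\rho^{\circ}/\partial\theta)=\mathrm{sgn}(\partial G/\partial\theta|_{\rho=\rho^{\circ}})$, the sign flip coming from $\partial G/\partial\rho<0$ in the denominator. Since $G$ depends on the system parameters only through $X_{1},\dots,X_{K}$, $Y$, $Z$ (and, in the case of $K$, through the index itself), the whole corollary reduces to signing $\partial G/\partial\theta$ at the root. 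First I would record three ``primitive'' facts, read off \eqref{G}: (i) $\partial G/\partial Z=Y\rho^{\alpha/2+1}(1-\kappa)/Z^{2}>0$ since $\kappa<1$ and $\kappa$ does not involve $Z$; (ii) $\partial G/\partial X_{k}<0$, because $G$ depends on $X_{k}$ only through $\kappa$, with $\partial\kappa/\partial X_{k}<0$ and $\partial G/\partial\kappa=Y\rho+Y\rho^{\alpha/2+1}/Z>0$; and (iii) the delicate one, $\partial G/\partial Y<0$ at the root. For (iii) I would feed the root identity $\tfrac{Y\rho^{\alpha/2+1}}{Z}(1-\kappa)=1+Y\rho\kappa$ into $\partial_{Y}G$; the explicit $1/Z$ term cancels and one is left with $\partial_{Y}G|_{\mathrm{root}}=-\tfrac{1}{Y}+\tfrac{(1+Y\rho)\,\partial_{Y}\kappa}{1-\kappa}$, after which a weighted-average (Chebyshev-type) bound on $\partial_{Y}\kappa$ --- writing the ratio as a weighted average of the values $1/(X_{k}s+1)$ with weights $X_{k}/(X_{k}s+1)$, $s\triangleq(1+Y\rho)^{-\alpha/2}$ --- controls the second term by $\rho\kappa/\delta$ and hence delivers the sign whenever $Y\rho^{\circ}\kappa(\rho^{\circ})\le\delta$ at the root.

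With the three primitives in hand, every conclusion follows by chaining through the explicit forms \eqref{bt}--\eqref{be} and recalling $\Lambda_{k},\Xi_{M_{k}}$ from Corollary~\ref{corollary_cop_low}. The relevant observations are: $X_{k}=\big(\sigma\pi/(K\phi\Lambda_{k}\Xi_{M_{k}})\big)^{\alpha/2}$ carries \emph{no} dependence on $\lambda_{c}$, is increasing in $\sigma$, and is increasing in $M_{c}$ because $\Xi_{M_{k}}$ decreases in $M_{k}=M_{c}-K+k$; $Y=\tfrac{\lambda_{s}-K\lambda_{c}}{K\lambda_{c}}(P_{j}/P_{a})^{\delta}$ increases in $\lambda_{s}$ and $P_{j}/P_{a}$ and decreases in $\lambda_{c}$ and $K$; and $Z=\tfrac{P_{a}}{P_{j}}\big(\tfrac{\pi\lambda_{e}M_{e}}{\phi(\lambda_{s}-K\lambda_{c})\ln\frac{1}{1-\epsilon}}\big)^{\alpha/2}$ increases in $\lambda_{e}$, $M_{e}$, $\lambda_{c}$, $K$ and decreases in $\epsilon$, $\lambda_{s}$, $P_{j}/P_{a}$. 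Hence raising $\sigma$ or $M_{c}$ raises every $X_{k}$, so $\rho^{\circ}$ decreases; raising $\epsilon$ lowers $Z$, so $\rho^{\circ}$ decreases; raising $\lambda_{e}$ or $M_{e}$ raises $Z$, so $\rho^{\circ}$ increases; and raising $\lambda_{c}$ lowers $Y$ while raising $Z$, both of which increase $\rho^{\circ}$. For $\lambda_{s}$ and $P_{j}/P_{a}$, which move $Y$ up and $Z$ down at once, I would not invoke primitives (i) and (iii) separately; instead I would use a cancellation particular to those directions: one checks $Z\propto Y^{-\alpha/2}$ (the power and distance factors conspire, using $\delta\alpha/2=1$), so that $\tfrac{Y\rho^{\alpha/2+1}}{Z}=(Y\rho)^{\alpha/2+1}/C$ and $\kappa$ both collapse to functions of $\xi\triangleq Y\rho$ alone, i.e. $G(\rho)=\widehat{G}(Y\rho)$ with $\widehat{G}$ a $Y$-free strictly decreasing function; its root $\xi^{*}$ is then constant and $\rho^{\circ}=\xi^{*}/Y$ is manifestly decreasing in $Y$, hence in $\lambda_{s}$ and in $P_{j}/P_{a}$.

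I expect the dependence on $K$ to be the main obstacle (with a secondary loose end in making (iii) unconditional). Unlike every other parameter, $K$ perturbs $X_{k}$, $Y$ and $Z$ simultaneously, changes the number of summands in both $\kappa$ and $\mathcal{T}$, and its effect on $X_{k}$ is not monotone by inspection --- the $1/K$ factor and $\Lambda_{k}$ pull against $\Xi_{M_{k}}$ --- so ``track the root of one fixed $G$'' must be argued across the discrete step $K\!\to\!K+1$ rather than by differentiation. My plan is to isolate the unambiguous effects ($Y$ decreases and $Z$ increases with $K$, each raising $\rho^{\circ}$ via primitives (iii) and (i)) and then dispose of the $X_{k}$ effect using the closed form $\Lambda_{k}=\sum_{j=K-k+1}^{K}1/j$ (obtained by recognizing $\Lambda_{k}=\pi\lambda_{c}\,\mathbb{E}[L_{k}^{2}]$ via Lemma~\ref{lemma_lk}, which makes the behaviour of $K\Lambda_{k}$ transparent) together with $\Xi_{M_{k}}\in(0,1)$ decreasing; since all surviving effects then push the same way, $\rho^{\circ}$ increases with $K$. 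For primitive (iii) the Chebyshev bound already gives the sign whenever $Y\rho^{\circ}\kappa(\rho^{\circ})\le\delta$ at the root, and I would cover the complementary regime by the same $\xi=Y\rho$ structural device or by a direct estimate; this, together with the discrete book-keeping for $K$, is where the bulk of the work lies.
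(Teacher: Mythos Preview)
Your framework is exactly the paper's: treat $\rho^{\circ}$ as the root of $G(\rho)=0$, use $\partial G/\partial\rho<0$ from Proposition~\ref{proposition_subopt_rho}, and route every physical parameter through the triple $(X_k,Y,Z)$ via the implicit function rule $d\rho^{\circ}/d\theta=-(\partial G/\partial\theta)/(\partial G/\partial\rho)$. Where you diverge is on primitive~(iii). The paper disposes of it in a single line, writing
\[
\frac{\partial G(\rho^{\circ})}{\partial Y}=\rho^{\circ}\kappa(\rho^{\circ})-\frac{(\rho^{\circ})^{\alpha/2+1}\bigl(1-\kappa(\rho^{\circ})\bigr)}{Z}=-\frac{1}{Y},
\]
which silently drops the $\partial_Y\kappa$ contribution that you correctly isolate; your residual term $(1+Y\rho)\partial_Y\kappa/(1-\kappa)$ is genuinely present, so your Chebyshev estimate and the conditionality you flag are in fact repairing a gap in the published argument rather than overcomplicating it. Your $\xi=Y\rho$ collapse for $\lambda_s$ and $P_j/P_a$ is a device not in the paper and is cleaner for those two parameters precisely because it sidesteps the $\partial_Y\kappa$ issue altogether; the paper would simply invoke its (incomplete) primitive~(iii) for every $Y$-moving parameter, including $\lambda_c$. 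Finally, the paper does not treat $K$ separately at all beyond the closing sentence ``determine the relationships between the key parameters and the auxiliary variables''; your worries about the discrete step $K\to K+1$, the extra summand in $\kappa$, and the competing effects on $X_k$ go beyond what the paper actually argues.
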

 \begin{proof}
 	Please refer to Appendix \ref{proof_corollary_opt_rho}.
 \end{proof}

\begin{figure}[!t]
	\centering
	\includegraphics[width= 3.5 in]{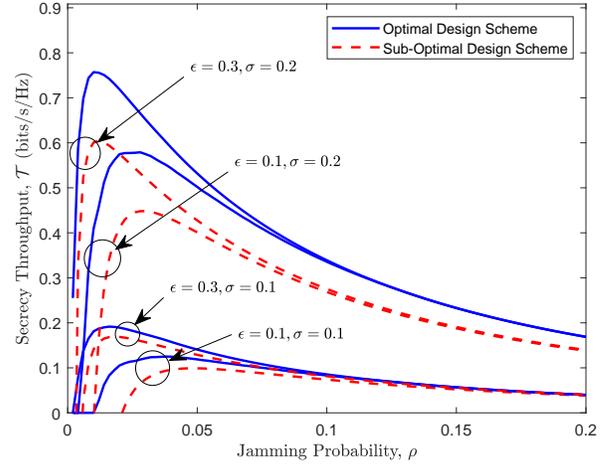}
	\caption{Secrecy throughput $ \mathcal{T} $ v.s. jamming probability $ \rho $ for different $\epsilon$ and $\sigma$, with $ P_j = 1 $ dBm, $ \lambda_c = 0.01 $, $ M_c = 16 $, $ K = 4 $, $ M_e = 2$, and $ \lambda_e = 0.0001 $.}
	\label{figTrho}
\end{figure}

Fig. \ref{figTrho} illustrates the secrecy throughput $\mathcal{T}$ as a function of jamming probability $\rho$ for both the optimal and sub-optimal schemes. As proved previously, $ \mathcal{T} $ first increases and then decreases as $ \rho $ increases, and there exists a unique optimal $ \rho $ that maximizes $ \mathcal{T} $. 
We show that as either $\epsilon$ or $\sigma$ becomes larger, the optimal $\rho$ becomes smaller producing a higher $ \mathcal{T} $ for both the optimal and sub-optimal schemes, which validates Corollary \ref{corollary_opt_rho}.  
The reason behind is that facing a looser SOP constraint (a larger $\epsilon$), fewer sensor nodes are required to send jamming signals against eavesdropping; meanwhile, when a larger COP $\sigma$ can be tolerable, activating less jammers significantly benefits secrecy throughput via supporting a much larger secrecy rate. 
We find that as $\rho$ increases, the two curves with different $\epsilon$'s but identical $\sigma$ merge. This implies that the jamming probability is sufficiently large to defeat eavesdroppers such that 
the secrecy throughput performance is less sensitive to the variation of the SOP constraints. 
It is interesting to observe that the gap between optimal and sub-optimal schemes decreases obviously as the COP threshold $\sigma$ reduces. 
This is because for a more stringent COP constraint, the feasible region of the secrecy rate for the optimal scheme shrinks such that the optimal secrecy rate  maximizing secrecy throughput approaches that of the sub-optimal scheme.

\begin{figure}[!t]
	\centering
	\includegraphics[width= 3.5 in]{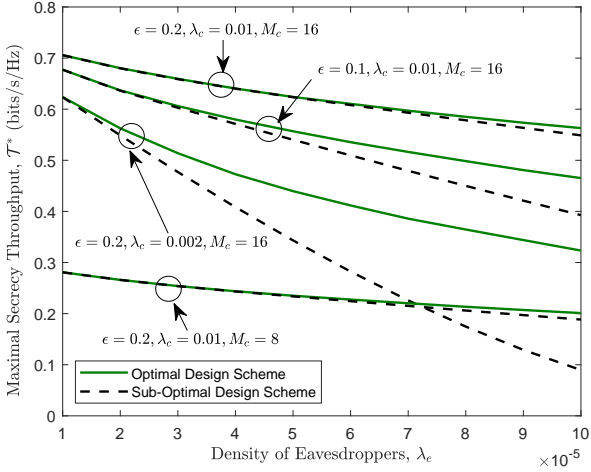}
	\caption{Maximal secrecy throughput $ \mathcal{T}^* $ v.s. eavesdropper density $\lambda_e$ for different $\epsilon$, $\lambda_c$, and $M_c$, with $ P_j = 1 $ dBm, $ K = 4 $, $ M_e = 2 $, and $ \sigma = 0.2 $.}
	\label{figTstar}
\end{figure}

Fig. \ref{figTstar} plots the maximal secrecy throughput $ \mathcal{T}^* $ of both optimal and sub-optimal schemes. 
It is easy to understand that $\mathcal{T}^*$ decreases with increasing density $\lambda_e$ of eavesdroppers and grows with increasing SOP threshold $\epsilon$, density $\lambda_c$ of FCs, and number $M_c$ of antennas at the FC side.
We also show that the gap between the optimal and sub-optimal schemes decreases as 
$ \epsilon $ or $ \lambda_c$ increases or as $\lambda_e$ decreases.
The underlying reason is that for these situations, adopting a larger secrecy rate can be more beneficial for maximizing secrecy throughput even sacrificing the reliability. This would make the resultant COP for the optimal scheme approach the COP threshold $\sigma$ which is the COP for the sub-optimal case, and hence the secrecy throughput performance for the two schemes becomes similar.

\section{Conclusions}
Physical layer security was investigated for a large-scale WSN with random multiple access under a stochastic geometry framework. 
An uncoordinated jamming scheme was devised to thwart the randomly distributed eavesdroppers.
Analytical expressions were derived for both the COP and SOP of the secure data delivery from sensors to a typical FC against eavesdropping.
Afterwards, the optimal wiretap code rates and the jamming probability were jointly designed to maximize the sum secrecy throughput subject to both COP and SOP constraints, with both optimal and sub-optimal algorithms examined.
Furthermore, some insights into how the optimal parameters should be adjusted to the communication environment and performance requirements were provided. 
Numerical results were presented to validate the theoretical fundings.
In particular, it was shown that for a stringent COP constraint or a loose SOP constraint, the performance gap between optimal and sub-optimal schemes becomes insignificant meaning that the sub-optimal scheme can be adopted as a low-complexity alternative to the optimal one.

\appendix
\subsection{Proof of Proposition \ref{proposition_cop_general}}\label{proof_proposition_cop_general}
Let $s\triangleq\frac{\beta_{t,k}L_k^{\alpha}}{P_a}$ and $I = I_a+I_j$, the COP $p_{co,k}$ can be computed by substituting \eqref{sinr_ok} into \eqref{cop_def},
\begin{align}\label{pco_appendix}
p_{co,k}  &=1- \mathbb{E}_{L_k}\mathbb{E}_{I} \left[ \mathbb{P}\left\{\|\bm U_k^{\rm T}\bm{h}_{o,s_k}^{\dagger}\|^2\geq {s}(I+\omega)\right\}\right]\nonumber\\
&\stackrel{\mathrm{(a)}}
= 1-\mathbb{E}_{L_k} \mathbb{E}_{I}\left[e^{-s(I+\omega)}
\sum_{m=0}^{M_k-1}\frac{s^m(I+\omega)^m}
{m!}\right]\nonumber\\
&=1- \mathbb{E}_{L_k} \left[e^{-s \omega}\sum_{m=0}^{M_k-1}\sum_{p=0}^{m}\binom{m}{p}\frac{\omega^{m-p}s^m}
{m!}\mathbb{E}_{I}\left[I^pe^{-s I}\right]\right]\nonumber\\
&\stackrel{\mathrm{(b)}}=1- \mathbb{E}_{L_k}\left[e^{-s \omega}
\sum_{m=0}^{M_k-1}\sum_{p=0}^{m}\binom{m}{p}\frac{\omega^{m-p}s^m}
{(-1)^pm!}\frac{d^p}{ds^p}\mathcal{L}_{I}(s)\right],
\end{align}
where (a) is due to $\|\bm U_k^{\rm T}\bm{h}_{o,s_k}^{\dagger}\|^2\sim {\rm Gamma}(M_k,1)$, and (b) follows from the Laplace transform property $t^nf(t)\overset{\mathcal{L}}\leftrightarrow(-1)^{n}\frac{ d^{n} }{d s^{n}}\mathcal{L}_{f(t)}(s)$.
Due to the independence of $I_a$ and $I_j$, the Laplace transform $\mathcal{L}_{I}(s)$ can be expressed as \cite[eqn. (8)]{Haenggi2009Stochastic}
\begin{align}\label{laplace}
\mathcal{L}_{I}(s)&=\mathbb{E}_{I_a+I_j}\left[e^{-s(I_a+I_j) }\right]=\mathcal{L}_{I_{a}}(s) \mathcal{L}_{I_{j}}(s)\nonumber\\
& =e^{-\phi \left(\lambda_aP_a^\delta+\lambda_jP_j^\delta\right) s^\delta}=e^{-\phi \lambda_o (P_as)^\delta},
\end{align}
where $\lambda_o = \lambda_a+\left({P_j}/{P_a}\right)^\delta \lambda_j$.
The $p$-order derivative $\frac{d^p}{ds^p}\mathcal{L}_{I}(s)$ can be obtained by \cite[Eq. (51)]{{Hunter08Transmission}}
\begin{equation}\label{laplace_p}
\frac{d^p}{ds^p}\mathcal{L}_{I}(s) = \frac{e^{-\phi  \lambda_o (P_as)^\delta}}{(-s)^{p}} \sum_{n=1}^{p}\left[\delta \phi \lambda_o (P_as)^\delta \right]^{n} \Upsilon_{p,n}.
\end{equation}
Substituting \eqref{laplace_p} into \eqref{pco_appendix} with $s={\beta_{t,k}L_k^{\alpha}}/{P_a}$ yields 
\begin{align}\label{pcok}
p_{co,k} = &1-\sum_{m=0}^{M_k-1}\sum_{p=0}^{m}\binom{m}{p}\left(\frac{\omega\beta_{t,k}}{P_a}
\right)^{m-p}\sum_{n=1}^{p}\frac{\left(\delta \phi \lambda_o \beta_{t,k}^\delta \right)^{n}}{m!} 
\nonumber\\
& \Upsilon_{p,n}\underbrace{\mathbb{E}_{L_k}\left[L_k^{\alpha(m-p)+2n}e^{- \frac{\omega\beta_{t,k}}{P_a}L_k^{\alpha}-\phi  \lambda_o\beta_{t,k}^\delta L_k^2}
	\right]}_{\mathcal{I}_k}.
\end{align}
The term $\mathcal{I}_k$ in \eqref{pcok} can be calculated as
\begin{align}
\mathcal{I}_k &= \int_0^{\infty} r^{\alpha(m-p)+2n}e^{- \frac{\omega\beta_{t,k}}{P_a}r^{\alpha}-\phi  \lambda_o\beta_{t,k}^\delta r^2} 	f_{L_{k}}(r)dr\nonumber\\
 &\stackrel{\mathrm{(a)}}=\pi\lambda_ck\binom{K}{k}\sum_{l=0}^{k-1}\binom{k-1}{l}(-1)^l \underbrace{\int_0^{\infty}x^{\mu}e^{-\tau_1x^{\alpha/2}-\tau_2 x}dx}_{\Omega_{\mu}},
\end{align}
where (a) follows from invoking the PDF $f_{L_{k}}(r)$ of $L_k$ given in \eqref{lk_pdf} along with the substitution $r^2\rightarrow x$. The proof can be completed after discussing the cases $p=0$ and $p\neq 0$.

\subsection{Proof of Corollary \ref{corollary_cop_int}}\label{proof_corollary_cop_int}
Plugging $\omega=0$ into \eqref{cop_general} yields
\begin{align}\label{cop_app}
	p_{co,k}&=  1 -\pi\lambda_ck\binom{K}{k}\sum_{l=0}^{k-1}\binom{k-1}{l}\sum_{m=0}^{M_k-1}	\frac{(-1)^{l}}{m!}\times	\nonumber\\
& 
\left[\bm 1_{m=0}\Omega_{0}+\bm 1_{m\neq0}\sum_{n=1}^m\left(\delta\phi\lambda_o\beta_{t,k}^{\delta}\right)^n\Omega_{n}\Upsilon_{m,n}\right].
\end{align}
Recalling $\Omega_{\mu}$ defined in Proposition \ref{proposition_cop_general}, it is easy to obtain that 
$\Omega_0 = {1}/{\tau_2}$ and $\Omega_n = {n!}/{\tau_2^{n+1}}$.
Then, the proof is completed.

\subsection{Proof of Corollary \ref{corollary_cop_low}}\label{proof_corollary_cop_low}
To begin with, let us revisit $p_{co,k}$ in \eqref{pco_appendix} and plug \eqref{laplace_p} with $s = {\beta_{t,k}L_k^{\alpha}}/{P_a}$ and $\omega = 0$ into \eqref{pco_appendix}.
Then we obtain
\begin{align}\label{pco_appendix_2}
&p_{co,k} = 1- \mathbb{E}_{L_k}\left[
\sum_{m=0}^{M_k-1}\frac{(-s)^m}
{m!}\frac{d^m}{ds^m}\mathcal{L}_{I}(s)\right] =1 - \nonumber\\
& \mathbb{E}_{L_k}\left[\underbrace
{e^{-\psi_k\beta_{t,k}^{\delta}}\left(1+\sum_{m=1}^{M_k-1}\sum_{n=1}^m\frac{\left(\delta\psi_k\beta_{t,k}^{\delta}\right)^n}{m!}\Upsilon_{m,n}\right)}_{\mathcal{Q}_k}\right],
\end{align}
where $\psi_k = \phi\lambda_o L_k^2$.
We observe that as $\psi_k\rightarrow 0$, the term $\mathcal{Q}_k\rightarrow 1$ for any $L_k$, which finally leads to $p_{co,k}\rightarrow 0$. Note that the asymptotic region $\psi_k\rightarrow 0$ reflects all possible situations where parameters including but not limited to $\lambda_c$, $\lambda_j$, $P_j$, and $L_k$ may produce a sufficiently small COP $p_{co,k}$.
Invoking the first-order Taylor
expansion with $e^{-\psi_k\beta_{t,k}^{\delta}}$ in \eqref{pco_appendix_2} around 
$\psi_k= 0$ and discarding the high order terms $\mathcal{O}\left(\psi_k^2\right)$, $p_{co,k}$ is simplified as 
\begin{align}
p_{co,k}&\approx 1 - \mathbb{E}_{L_k}\left[1 - \psi_k\beta_{t,k}^{\delta}\Xi_{M_k}\right]\nonumber\\
& = \mathbb{E}_{L_k}\left[\phi\lambda_o L_k^2\beta_{t,k}^{\delta}\Xi_{M_k}\right],
\end{align}
with $\Xi_{M_k}$ defined in Corollary \ref{corollary_cop_low}. Computing the above expectation by invoking \eqref{lk_pdf} gives the result in \eqref{cop_low}. 
 
\subsection{Proof of Proposition \ref{proposition_sop_general}}\label{proof_proposition_sop_general}
The SOP defined in \eqref{sop_def} can be rewritten cas 
\begin{align}\label{pso_appendix}
 &p_{so,k}=1- \mathbb{E}_{\Phi_e}\left[\prod_{e \in \Phi_{e}} \mathbb{P}\left\{{\rm{SINR}}_{e, k}<\beta_{e,k} | \Phi_{e}\right\}\right]\nonumber\\
&\stackrel{\mathrm{(a)}}=1- \exp\left(-\lambda_e
\int_0^{\infty}\int_0^{2\pi}\mathbb{P}
\left\{{\rm{SINR}}_{e, k}\ge\beta_{e,k}\right\}rd\theta dr\right),
\end{align}
where ${\rm{SINR}}_{e,k}$ is given by  \eqref{sinr_ek} with $r \triangleq r_{e,s_k} $, and 
(a) follows from the probability generating
functional (PGFL) over a PPP \cite{Chiu2013Stochastic}.

Defining $v\triangleq r^{\alpha}\beta_{e,k}/{P_a}$, then  $\mathbb{P}
\left\{{\rm{SINR}}_{e,k}
\ge\beta_{e,k}\right\}$ in \eqref{pso_appendix} can be calculated by invoking \cite[Eq. (11)]{Gao1998Theoretical}, i.e.,
\begin{equation}\label{pro_sinr}
\mathbb{P}\left\{{\rm{SINR}}_{e,k}
\ge\beta_{e,k}\right\}=
e^{-{\omega }{v}}
\sum_{m=1}^{M_e}\frac{(\omega v)^{m-1}}{(m-1)!}\mathbb{E}_{\Phi_j}\left[A_m(v)
\right],
\end{equation}
where $A_m(v) = \frac{\sum_{n=0}^{M_e-m}c_nv^n}{\prod_{ z\in \Phi_j}\left(1+P_jr_{ e,z}^{-\alpha}v\right)}$ with $c_n$ being the coefficient of $v^n$ in $\prod_{ z\in \Phi_j}\left(1+P_jr_{ e,z}^{-\alpha}v\right)$, which is
\begin{equation}\label{c_n}
c_n = \frac{1}{n!}
\sum_{	\mathcal{Z}_n	\subset\Phi_j}
\prod_{i=1}^{n}\frac{P_j}{r_{e,z_i}^{\alpha}},
\end{equation}
where $\mathcal{Z}_n\triangleq\{z_1,\cdots,z_n\}$ denotes an arbitrary subset of $n$ points selected from $\Phi_j$.
Substituting \eqref{c_n} into $A_m(v) $ yields
\begin{align}\label{pro_sinr2}
&\mathbb{E}_{\Phi_j}\left[A_m(v)
\right]=\sum_{n=0}^{M_e-m}
\frac{1}{n!} \underbrace{\mathbb{E}_{\Phi_j}\left[
 \sum_{\mathcal{Z}_{n}
	\in\Phi_j}\frac{P_j^{n}v^{n}
	\prod_{i=1}^{n}r_{e,z_i}^{-\alpha}}
{\prod_{z\in\Phi_j}\left(1+P_jr_{ e,z}^{-\alpha}v\right)}\right]}_{\mathcal{C}}.
\end{align}
Invoking Campbell-Mecke theorem \cite[Theorem 4.2]{Chiu2013Stochastic} with the term $\mathcal{C}$ in \eqref{pro_sinr2} gives
\begin{align}\label{campbell}
\mathcal{C}&=\left(2\pi\lambda_j\int_0^{\infty}
\frac{\vartheta}{1+\vartheta}tdt\right)^{n}
\exp\left(-2\pi\lambda_j\int_0^{\infty}
\frac{\vartheta}{1+\vartheta}tdt\right)\nonumber\\
&
\stackrel{\mathrm{(a)}} = \left(\phi\lambda_jP_j^{\delta}v
^{\delta}\right)^{n}\exp\left(
	-\phi\lambda_jP_j^{\delta}v
	^{\delta}\right),
\end{align}
where $\vartheta\triangleq P_j v t^{-\alpha}$, and (a) stems from \cite[Eq. (3.241.2)]{Gradshteyn2007Table}.
Substituting \eqref{pro_sinr2} with \eqref{campbell} into \eqref{pro_sinr} and plugging the result into  \eqref{pso_appendix}, the proof can be completed after some algebraic operations. 

\subsection{Proof of Proposition \ref{proposition_subopt_rho}}\label{proof_proposition_subopt_rho}
	We begin with re-expressing the objective function $T(\rho)$ in \eqref{sub_st_max} as $T(\rho)=\sum_{k=1}^K T_k(\rho)=\sum_{k=1}^K \ln\frac{W_k(\rho)}{Q(\rho)}$, where $W_k(\rho)=1+X_k(1+Y\rho)^{-\frac{\alpha}{2}}$ and $Q(\rho)=1+Z\rho^{-\frac{\alpha}{2}}$.
	For simplicity, we use the notations $T$, $T_k$, $W_k$, and $Q$, which are functions of $\rho$ by default. 
	The derivative $\frac{dT}{d\rho}$ can be calculated as 
	\begin{align}\label{dT}
	\frac{dT}{d\rho}&=\sum_{k=1}^K\frac{1}{W_k Q}\left(Q\frac{dW_k}{d\rho}-W_k\frac{dQ}{d\rho}\right)\nonumber\\
	&\stackrel{\mathrm{(a)}}=\sum_{k=1}^K\frac{1}{W_k Q}\left(-\frac{\alpha Y(W_k-1)Q}{2(1+Y\rho)}+\frac{\alpha(Q-1)W_k}{2\rho}\right)\nonumber\\
	&=\frac{\alpha}{2}\left(\frac{K(Q-1)}{\rho Q(1+Y\rho)}-\sum_{k=1}^K\frac{Y(W_k-Q)}{QW_k(1+Y\rho)}\right),
	\end{align}
	where $\mathrm{(a)}$ follows from the following two derivatives
	\begin{equation}\label{dWk}
	\frac{dW_k}{d\rho}=-\frac{\alpha}{2}X_k(1+Y\rho)^{-\alpha/2-1}Y=-\frac{\alpha}{2}\frac{Y(W_k-1)}{1+Y\rho},
	\end{equation}
	\begin{equation}\label{dQ}
	\frac{dQ}{d\rho}=-\frac{\alpha}{2}Z\rho^{-\alpha/2-1}=-\frac{\alpha(Q-1)}{2\rho}.
	\end{equation} 
	
	It is difficult to prove the concavity of $T$ w.r.t. $\rho$ by directly judging the monotonicity of $\frac{dT}{d\rho}$ from \eqref{dT}.
	In order to circumvent this issue, we reconstruct $\frac{dT}{d\rho}$ as below and resort to exploring the properties of its sign, 
	\begin{align}\label{dT2}
	\frac{dT}{d\rho}=\frac{\alpha K(Q-1)}{2\rho Q(1+Y\rho)}\left(\underbrace{1-\sum_{k=1}^K\frac{Y\rho(W_k-Q)}{KW_k(Q-1)}}_{G(\rho)}\right).
	\end{align}
	
	Clearly, the first term $\frac{\alpha K(Q-1)}{2\rho Q(1+Y\rho)}$ in \eqref{dT2} is constantly positive, and therefore the sign of $\frac{dT}{d\rho}$ is solely determined by that of the second term $G(\rho)$ which unfortunately is intuitively elusive. 
	To this end, we turn to examine the monotonicity of $G(\rho)$ on $\rho$ before determining its sign. Specifically, we rewrite $G(\rho)$ as $G(\rho) = 1 - \sum_{k=1}^K G_{1,k}(\rho) G_{2,k}(\rho)$ with $G_{1,k}(\rho)\triangleq \frac{Y\rho}{KW_k}$ and $G_{2,k}(\rho)\triangleq \frac{W_k-Q}{Q-1}$.
	Since both $W_k$ and $Q$ are decreasing functions of $\rho$, we can easily prove that $G_{1,k}(\rho)$ increases with $\rho$.
	Substituting $W_k$ and $Q$ into $G_{2,k}(\rho)$ yields $G_{2,k}(\rho) =\frac{X_k}{Z}\left({\rho}^{-1}+Y\right)^{-\alpha/2}-1$, which also increases with $\rho$.
	Consequently, $G(\rho)$ is a monotonically decreasing function of $\rho$.
	This result significantly facilitates the derivation of  the optimal $\rho^*$ that maximizes $T$ by  differentiating the following cases.
	
	1) Case $G(\rho_{\rm min})< 0$: Since $G(\rho)$ monotonically decreases with $\rho$, in this case $G(\rho)$ or $\frac{dT}{d\rho}$ keeps negative within $\rho\in[\rho_{\rm min}, 1]$. 
	This indicates that $T$ is a decreasing function of $\rho$, and the optimal $\rho^*$ reaching the maximal $T$ is $\rho^*=\rho_{\rm min}$.
	The condition $G(\rho_{\rm min})< 0$ can be equivalently transformed as below by plugging $\rho=\rho_{\rm min}$ into $G(\rho)$,
	\begin{align}\label{case1}
	G(\rho_{\rm min}) &= 1 - \sum_{k=1}^K \frac{Y\rho_{\rm min}}{KZ}\frac{X_k\left(Y+1/\rho_{\rm min}\right)^{-\alpha/2}-Z}{X_k\left(1+Y\rho_{\rm min}\right)^{-\alpha/2}+1}<0\nonumber\\
&	\Leftrightarrow Z < 
	\frac{Y\rho_{\rm min}^{\alpha/2+1}\left(1-\kappa(\rho_{\rm min})\right)}{1+ Y\rho_{\rm min}\kappa(\rho_{\rm min})}.
	\end{align}
	
	2) Case $G(1)\ge 0 $: Again, due to the monotonically-decreasing feature of $G(\rho)$ on $\rho$, in this case $G(\rho)$ or $\frac{dT}{d\rho}$ maintains positive within $\rho\in[\rho_{\rm min}, 1]$. 
	In other words, $T$ monotonically increases with $\rho$ and is maximized at $\rho=1$. 
	Following \eqref{case1}, the condition $G(1)\ge 0$ is equivalent to $Z\ge \frac{Y(1-\kappa(1))}{1+Y\kappa(1)}$.
	
	3) Case $G(1)< 0 \le G(\rho_{\rm min})$: In this case, as $\rho$ increases from $\rho_{\rm min}$ to $1$, $G(\rho)$ or $\frac{dT}{d\rho}$ is initially positive and then becomes negative, which implies $T$ first increases and then decreases with an increasing $\rho$, and there exists a unique peak value of $T$. 
	Obviously, the maximal $T$ is obtained when $\rho$ arrives at the zero-crossing point of $G(\rho)$ or $\frac{dT}{d\rho}$.
	
	By now, the proof is completed.

\subsection{Proof of Corollary \ref{corollary_opt_rho}}\label{proof_corollary_opt_rho}
The results provided by Corollary \ref{corollary_opt_rho} can be obtained by examining the derivatives of $\rho^{\circ}$ on the variables $X_k$, $Y$, and $Z$, respectively, invoking the derivative rule for implicit functions with the equation $G(\rho^{\circ})=0$ with $G(\rho)$ defined in \eqref{G} \cite{Zheng2015Multi}.

To begin with, the derivative $\frac{d \rho^{\circ}}{d X_k}$ can be calculated as 
 \begin{equation}\label{dX}
 \frac{d \rho^{\circ}}{d X_k}=-\frac{\partial G(\rho^{\circ})/\partial X_k}{\partial G(\rho^{\circ})/\partial \rho^{\circ}}.
 \end{equation}
It is easy to see that $\frac{\partial G(\rho^{\circ})}{\partial \rho^{\circ}}<0$ since $G(\rho)$ is a decreasing function of $\rho$.
Besides, as $\kappa(\rho^{\circ})$ is a decreasing function of $X_k$ and at the same time $G(\rho^{\circ})$ increases with $\kappa(\rho^{\circ})$, we have $\frac{\partial G(\rho^{\circ})}{\partial X_k}<0$, which yields $\frac{d \rho^{\circ}}{d X_k}<0$.
Similarly, we can readily show $\frac{d \rho^{\circ}}{d Z}>0$.
We can also prove that $\frac{d \rho^{\circ}}{d Y}<0$ by noting that 
\begin{align}
\frac{\partial G(\rho^{\circ})}{\partial Y}  = \rho^{\circ} \kappa(\rho^{\circ})-\frac{\left(\rho^{\circ}\right)^{\alpha/2+1}\left(1-\kappa(\rho^{\circ})\right)}{Z}\stackrel{\mathrm{(a)}}= -\frac{1}{Y}<0,
\end{align}
where $\mathrm{(a)}$ follows from $G(\rho^{\circ})=0$.
After obtaining the above results, the proof can be completed by simply determining the relationships between the key parameters $\{\sigma,\epsilon,M_c,\lambda_s,P_j/P_a,\lambda_c,K,\lambda_e,M_e\}$ and the auxiliary variables $\{X_k,Y,Z\}$. 
Due to space limitation, we just take the SOP threshold $\epsilon$ as an example, where we observe that $Z$ decreases with $\epsilon$ and meanwhile $\frac{d \rho^{\circ}}{d Z}>0$ such that $\frac{d \rho^{\circ}}{d \epsilon}<0$.

\end{document}